\newtheoremstyle{plain-boldhead}
  {\topsep}
  {\topsep}
  {\itshape}
  {}
  {\bfseries}
  {.}
  { }
  {\thmname{#1}\thmnumber{ #2}\thmnote{ (\bfseries #3)}}
\newtheoremstyle{definition-boldhead}
  {\topsep}
  {\topsep}
  {\normalfont}
  {}
  {\bfseries}
  {.}
  { }
  {\thmname{#1}\thmnumber{ #2}\thmnote{ (\bfseries #3)}}
\theoremstyle{plain-boldhead}
\newtheorem{theorem}{Theorem}
\newtheorem{lemma}[theorem]{Lemma}
\theoremstyle{definition-boldhead}
\newtheorem{definition}{Definition}
\newtheorem{example}{Example}
\def \ifempty#1{\def\temp{#1} \ifx\temp\empty }
\newcommand{\str}[1]{\textsc{#1}}
\newcommand{\var}[1]{\textit{#1}}
\newcommand{\op}[1]{\textsl{#1}}
\newcommand{\false}{\textsc{false}\xspace}
\newcommand{\true}{\textsc{true}\xspace}
\newcommand{\etal}{\emph{et al.}}
\newcommand{\BN}{\ensuremath{\mathbb{N}}\xspace}
\newcommand{\CM}{\ensuremath{\mathcal{M}}\xspace}
\newcommand{\CP}{\ensuremath{\mathcal{P}}\xspace}
\newcommand{\CV}{\ensuremath{\mathcal{V}}\xspace}
\newcommand\bcch{\var{bcch}\xspace}
\newcommand\msgs{\var{msgs}\xspace}
\newcommand\inround{\var{inround}\xspace}
\newcommand\vc{\var{vc}\xspace}
\newcommand\cut{\var{cut}\xspace}
\newcommand\emptyhashmap{[\,]}
\newcommand\resend{\var{resend}\xspace}
\newcommand\ppi{\var{$p_i$}\xspace}
\newcommand\ppj{\var{$p_j$}\xspace}
\newcommand\delivered{\var{delivered}\xspace}
\newcommand\ind{\var{index}\xspace}
\newcommand\bargamma{\ensuremath{\overline{\gamma}}\xspace}
\begin{document}

\title{\bf Quick Order Fairness}

\author{Christian Cachin$^1$\\
  University of Bern\\
  \url{cachin@inf.unibe.ch}
  \and Jovana Mićić$^1$\\
University of Bern\\
  \url{jovana.micic@inf.unibe.ch}
  \and Nathalie Steinhauer$^1$\\
University of Bern\\
  \url{nathalie.steinhauer@inf.unibe.ch}
  \and Luca Zanolini$^1$\\
University of Bern\\
  \url{luca.zanolini@inf.unibe.ch}
}

\date{}

\footnotetext[1]{Institute of Computer Science, University of Bern,
  Neubr\"{u}ckstrasse 10, 3012 CH-Bern, Switzerland.}

\maketitle

\begin{abstract}
  \noindent
    Leader-based protocols for consensus, i.e., atomic broadcast, allow some processes to unilaterally
    affect the final order of transactions. This has become a problem for
    blockchain networks and decentralized finance because it facilitates
    front-running and other attacks. To address this, \emph{order fairness}
    for payload messages has been introduced recently as a new safety property
    for atomic broadcast complementing traditional
    \emph{agreement} and \emph{liveness}.
    We relate order fairness to the standard validity notions for consensus
    protocols and highlight some limitations with the existing formalization. 
    Based on this, we introduce a new \emph{differential} order fairness
    property that fixes these issues.
    We also present the \emph{quick order-fair atomic broadcast protocol}
    that guarantees payload message delivery in a differentially fair order and
    is much more efficient than existing order-fair consensus protocols.  It
    works for asynchronous and for eventually synchronous networks with
    optimal resilience, tolerating corruptions of up to one third of the
    processes.  Previous solutions required there to be less than one fourth of
    faults.  Furthermore, our protocol incurs only quadratic cost, in terms
    of amortized message complexity per delivered payload.

    \medskip
    
    \noindent \textbf{Keywords.}
    Consensus, atomic broadcast, decentralized finance, front-running attacks, differential order fairness.
\end{abstract}
\section{Introduction}
\label{sec:intro}

The nascent field of \emph{decentralized finance} (or simply \emph{DeFi})
suffers from insider attacks: Malicious miners in permissionless blockchain
networks or Byzantine leaders in permissioned atomic broadcast protocols
have the power of selecting messages that go into the ledger and
determining their final order.  Selfish participants may also insert their
own, fraudulent transactions and thereby extract value from the network and its innocent users.  For instance, a decentralized exchange can be
exploited by \emph{front-running}, where a genuine message~$m$ carrying an
exchange transaction is \emph{sandwiched} between a message
$m_{\text{before}}$ and a message $m_\text{after}$.  If $m$ buys a
particular asset, the insider acquires it as well using $m_{\text{before}}$
and sells it again with $m_\text{after}$, typically at a higher price.
Such front-running and other price-manipulation attacks represent a serious
threat.  They are prohibited in traditional finance systems with
centralized oversight but must be prevented technically in DeFi. Daian
\etal~\cite{DBLP:conf/sp/DaianGKLZBBJ20} have coined the term \emph{miner
  extractable value (MEV)} for the profit that can be gained from such
arbitrage opportunities.

The traditional properties of \emph{atomic broadcast}, often
somewhat imprecisely called \emph{consensus} as well, guarantee a total order: that all correct parties
obtain the same sequence of messages and that any message submitted to the
network by a client is delivered in a reasonable lapse of time.  However,
these properties do not further constrain \emph{which} order is chosen, and
malicious parties in the protocol may therefore manipulate the order or
insert their own messages to their benefit. Kelkar
\etal~\cite{DBLP:conf/crypto/Kelkar0GJ20} have recently introduced the new
safety property of \emph{order fairness} that addresses this in the Byzantine
model. Kursawe~\cite{DBLP:conf/aft/Kursawe20} and Zhang
\etal~\cite{DBLP:conf/osdi/ZhangSCZA20} have formalized this problem as
well and found different ways to tackle it, relying on somewhat stronger
assumptions.

Intuitively, \emph{order fairness} aims at ensuring that messages received
by ``many'' parties are scheduled and delivered earlier than messages
received by ``few'' parties.  The \emph{Condorcet paradox} demonstrates,
however, that such preference votes can easily lead to cycles even if the
individual votes of majorities are not circular.  The solution offered
through \emph{order fairness}~\cite{DBLP:conf/crypto/Kelkar0GJ20} may
therefore output multiple messages \emph{together as a set} (or batch),
such that there is \emph{no order} among all messages in the same set.
Kelkar \etal~\cite{DBLP:conf/crypto/Kelkar0GJ20} name this property
\emph{block-order fairness} but calling such a set a ``block'' may easily
lead to confusion with the low-level blocks in mining-based
protocols.

In this paper, we investigate order fairness in networks with $n$ processes
of which $f$ are faulty, for asynchronous and eventually synchronous
atomic broadcast.  This covers the vast majority of relevant applications, since
timed protocols that assume synchronous clocks and permanently bounded
message delays have largely been abandoned in this space.

We first revisit the notion of block-order
fairness~\cite{DBLP:conf/crypto/Kelkar0GJ20}.  In our interpretation, this requires that when
$n$ correct processes broadcast two payload messages $m$ and $m'$,
and $\bargamma n$ of them broadcast $m$ before $m'$ for
some $\bargamma > \frac{1}{2}$, then $m'$ is not delivered by the protocol
before~$m$, although both messages may be output together.  This guarantee
is difficult to achieve in practice because Kelkar
\etal~\cite{DBLP:conf/crypto/Kelkar0GJ20} show that for the relevant values
of \bargamma approaching one half, the resilience of any protocol decreases.
Tolerating only a small number of faulty parties seems prohibitive in
realistic settings.

More importantly, we show that $\bargamma$ cannot be too close
to~$\frac{1}{2}$ because $\bargamma \geq \frac{1}{2} + \frac{f}{n-f}$ is
necessary for any protocol.
This result follows from establishing a link to the differential validity
notion of consensus, formalized by Fitzi and
Garay~\cite{DBLP:conf/podc/FitziG03}.  Notice that block-order fairness is
a relative measure.  We are convinced that a differential notion is better
suited to address the problem.  We, therefore, overcome this inherent
limitation of relative order fairness by introducing \emph{differential
  order fairness}: When the number of correct processes that broadcast a
message~$m$ before a message~$m'$ exceeds the number that broadcast $m'$
before $m$ by more than $2f + \kappa$, for some $\kappa \geq 0$, then the
protocol must \emph{not} deliver $m'$ before~$m$ (but they may be delivered
together).  This notion takes into account existing results on differential
validity for consensus~\cite{DBLP:conf/podc/FitziG03}.  In particular, when
the difference between how many processes prefer one of $m$ and $m'$ over
the other is smaller than $2f$, then \emph{no protocol exists} to deliver
them in fair order.

Last but not least, we introduce a new protocol, called \emph{quick
  order-fair atomic broadcast}, that implements differential order fairness
and is much more efficient than the previously existing algorithms.  In
particular, it works with optimal resilience $n > 3f$, requires $O(n^2)$
messages to deliver one payload on average and needs
$O(n^2 L + n^3 \lambda)$ bits of communication, with payloads of up to $L$
bits and cryptographic $\lambda$-bit signatures. This holds for \emph{any}
order-fairness parameter~$\kappa$.  For comparison, the asynchronous
Aequitas protocol~\cite{DBLP:conf/crypto/Kelkar0GJ20} has resilience $n > 4f$ or worse, depending on its order-fairness parameter, and needs $O(n^4)$ messages.

To summarize, the contributions of this paper are as follows:
\begin{itemize}
\item It illustrates some \emph{limitations} that are inherent in the
  notion of block-order fairness (Section~\ref{sec:limitations}).
\item It introduces \emph{differential order fairness} as a measure for
  defining fair order in atomic broadcast protocols
  (Section~\ref{sec:differential}).
\item It presents the \emph{quick order-fair atomic broadcast protocol} for
differentially order-fair Byzantine atomic broadcast with optimal resilience
  $n > 3f$ (Section~\ref{sec:protocol}).
\item It demonstrates that the quick order-fairness protocol has quadratic
  amortized message complexity, which is an $n^2$-fold improvement compared
  to the most efficient previous protocol for the same task
  (Section~\ref{subsection:complexity}).
\end{itemize}
The paper starts with a review of previous work (Section~\ref{sec:related})
and by describing our system model (Section~\ref{sec:model}).

\section{Related work}
\label{sec:related}

Over the last decades, extensive research efforts have explored the state-machine replication problem. A large number of papers refer to this problem, but only a few of them consider fairness in the order of delivered payload messages. In this section, we review the related work on fairness.

Kelkar \etal~\cite{DBLP:conf/crypto/Kelkar0GJ20} introduce a new property called \emph{transaction order-fairness} which prevents adversarial manipulation of the ordering of transactions, i.e., payload messages. They investigate assumptions needed for achieving this property in a permissioned setting and formulate a new class of consensus protocols, called Aequitas, that satisfy order fairness. A subsequent paper by Kelkar \etal~\cite{DBLP:journals/iacr/KelkarDK21} extends this approach to a permissionless setting. Recently, Kelkar \etal~\cite{DBLP:journals/iacr/KelkarDLJK21} presented another permissioned Byzantine atomic-broadcast protocol called Themis. It introduces a new technique called \emph{deferred ordering}, which overcomes a liveness problem of the Aequitas protocols.

Kursawe~\cite{DBLP:conf/aft/Kursawe20} and Zhang \etal~\cite{DBLP:conf/osdi/ZhangSCZA20} have independently postulated alternative definitions of order fairness, called \emph{timed order fairness} and \emph{ordering linearizability}, respectively. Both notions are strictly weaker than order fairness of transactions, however~\cite{DBLP:journals/iacr/KelkarDK21}. Timed order fairness assumes that all processes have access to synchronized local clocks; it can ensure that if all correct processes saw message $m$ to be ordered before $m'$, then $m$ is scheduled and delivered before $m'$. Similarly, ordering linearizability says that if the highest timestamp provided by any correct process for a message $m$ is lower than the lowest timestamp provided by any correct process for a message $m'$, then $m$ will appear before $m'$ in the output sequence. The implementation of ordering linearizability~\cite{DBLP:conf/osdi/ZhangSCZA20} uses a median computation, which can easily be manipulated by faulty processes~\cite{DBLP:journals/iacr/KelkarDK21}. 

The Hashgraph~\cite{baird2016swirlds} consensus protocol also claims to achieve fairness. It uses gossip internally and all processes build a \emph{hash graph} reflecting all of the gossip events. However, there is no formal definition of fairness and the presentation fails to recognize the impossibility of fair message-order resulting from the \emph{Condorcet paradox}. Kelkar \etal~\cite{DBLP:conf/crypto/Kelkar0GJ20} also show an attack that allows a malicious process to control the order of the messages delivered by Hashgraph.

A complementary measure to prevent message-reordering attacks relies on
threshold cryptography~\cite{DBLP:journals/toplas/ReiterB94,DBLP:conf/crypto/CachinKPS01,DBLP:conf/dsn/DuanRZ17}:
clients encrypt their input (payload) messages under a key shared by the group of
processes running the atomic broadcast protocol.  They initially order the encrypted
messages and subsequently collaborate for decrypting them. Hence, their
contents become known only \emph{after} the message order has been decided.
For instance, the Helix protocol~\cite{DBLP:conf/icnp/AsayagCGLRTY18}
implements this approach and additionally exploits in-protocol randomness
for two additional goals: to elect the processes running the protocol from a
larger group and to determine which messages among all available ones
must be included by a process when proposing a block.  This method provides
resistance to censorship but still permits some order-manipulation
attacks.

\section{System model and preliminaries}
\label{sec:model}

\subsection{System model}

\paragraph{Processes.} We model our system as a set of $n$ \emph{processes} $\CP = \{p_1, \dots, p_n\}$, also called \emph{parties}, that communicate with each other. Processes interact with each other by exchanging messages reliably in a network. A protocol for \CP consists of a collection of programs with instructions for all processes. Processes are computationally bounded and protocols may use cryptographic primitives, in particular, digital signature schemes.

\paragraph{Failures.} In our model, we distinguish two types of processes. Processes that follow the protocol as expected are called \emph{correct}. Contrary, the processes that deviate from the protocol specification or may crash are called \emph{Byzantine}.

\paragraph{Communication.} We assume that there exists a low-level mechanism for sending messages over reliable and authenticated point-to-point links between processes. In our protocol implementation, we describe this as ``sending a message" and ``receiving a message". Additionally, we assume \emph{first-in first-out (FIFO) ordering} for the links. This ensures that messages broadcast by the same correct process are delivered in the order in which they were sent by a correct recipient.

\paragraph{Timing.} This work considers two models, \emph{asynchrony} and \emph{partial synchrony}. Together they cover most scenarios used today in the context of secure distributed computing. In an \emph{asynchronous} network, no physical clock is available to any process and the delivery of messages may be delayed arbitrarily. In such networks, it is only guaranteed that a message sent by a correct process will \emph{eventually} arrive at its destination. One can define asynchronous time based on logical clocks. A \emph{partially synchronous} network~\cite{DBLP:journals/jacm/DworkLS88} operates asynchronously until some point in time (not known to the processes), after which it becomes stable. This means that processing times and message delays are bounded afterwards, but the maximal delays are not known to the protocol.

\subsection{Byzantine FIFO Consistent Broadcast Channel}

We are using a Byzantine FIFO consistent broadcast channel (BCCH) that allows the processes to deliver multiple payloads and ensures a notion of consistency despite Byzantine senders. The interface of such a channel provides two events involving payloads from a domain~\CM:
\begin{itemize}
  \item A process invokes $\op{\bcch-broadcast}(m)$ to broadcast a message $m \in \CM$ to all processes.
  \item An event $\op{\bcch-deliver}(\ppj, l, m)$ delivers a message $m \in \CM$ with label~$l \in \left\{0,1\right\}^* $ from a process~\ppj.
\end{itemize}
The label that comes with every delivered message is an arbitrary bit string generated by the channel. Intuitively, the channel ensures that if a message is delivered with some label, then the message itself is the same at all correct processes that deliver this label.

\begin{definition}[Byzantine FIFO Consistent Broadcast Channel]\label{def:bcch}
  A Byzantine FIFO consistent broadcast channel satisfies the following properties:
 \begin{description}
 \item[Validity:]  If a correct process broadcasts a message $m$, then every correct process eventually delivers $m$.
 \item[No duplication:] For every process $p_j$ and label $l$, every correct process delivers at most one message with label $l$ and sender $p_j$.
 \item[Integrity:] If some correct process delivers a message $m$ with sender $p_j$ and process $p_j$ is correct, then $m$ was previously broadcast by $p_j$.
 \item[Consistency:] If some correct process delivers a message $m$ with label $l$ and sender \ppj, and another correct process delivers a message $m'$ with label $l$ and sender \ppj, then $m = m'$.
 \item[FIFO delivery:] If a correct process broadcasts some message~$m$ before it broadcasts a message~$m'$, then no correct process delivers $m'$ unless it has already delivered $m$.
 \end{description}
\end{definition}

This primitive can be implemented by running, for every sender~$p_i$, a sequence of standard consistent Byzantine broadcast instances~\cite[Sec.~3.10]{DBLP:books/daglib/0025983} such that exactly one instance in each sequence is active at every moment. Each consistent broadcast instance is identified by a per-sender sequence number. When an instance delivers a message, the protocol advances the sequence number and initializes the next instance.  The sequence number serves as the label. Details of this protocol are described by Cachin \etal~\cite[Sec.~3.12.2]{DBLP:books/daglib/0025983}; notice that their protocol also ensures FIFO delivery, although this is not explicitly mentioned there.

In addition to the $\op{bcch-broadcast}$ and $\op{bcch-deliver}$ events, in our protocol we use the following methods to access the BCCH primitive: $\op{bcch-create-proof}$ and $\op{bcch-verify-proof}$. Those methods ensure that missing messages can be transferred in a verifiable way, and they are implemented as in the protocol for verifiable consistent broadcast by Cachin \etal~\cite{DBLP:conf/crypto/CachinKPS01}.
The input of \op{bcch-create-proof} is a list of messages and it outputs a string $s$ that contains a proof along with the list of messages to be sent. A process that receives a message providing $s$ can input this in \op{bcch-verify-proof} to verify the proof contained in $s$ such that it is impossible to forge a proof for a message that was not \op{bcch-delivered}.

Another two methods, $\op{bcch-get-length}$ and $\op{bcch-get-messages}$, are used to get the number of sent payload messages and to extract them.

\subsection{Validated Byzantine Consensus}
\label{sec:vbc}

Validated Byzantine consensus~\cite{DBLP:conf/crypto/CachinKPS01} defines an \textit{external validity} condition. It requires that the consensus value is legal according to a global, efficiently computable predicate~$P$, known to all processes. This allows the protocol to recognize proposed values that are acceptable to an external application. Note that it is not required that the decision value was proposed by a correct process, but all processes must be able to verify the validity. A consensus primitive is accessed through the events $\op{vbc-propose}(v)$ and $\op{vbc-decide}(v)$, where $v \in \CV$ has a potentially large domain~\CV and may contain a proof, which allows processes to verify the validity of $v$.

\begin{definition}[Validated Byzantine Consensus]\label{def:vbc}
 A protocol solves validated Byzantine consensus with validity predicate~$P$ if it satisfies the following conditions:
 \begin{description}
 \item[Termination:] Every correct process eventually decides some value.
 \item[Integrity:] No correct process decides twice.
 \item[Agreement:] No two correct processes decide differently.
 \item[External validity:] Every correct process only decides a value $v$ such that $P(v) = \true$. Moreover, if all processes are correct and propose $v$, then no correct process decides a value different from $v$.
\end{description}
\end{definition}

We intend this notion to cover asynchronous protocols, which actually only terminate probabilistically, as well as eventually synchronous protocols. The difference is not essential to our use of them.

Originally, \emph{external validity} has been defined for \emph{asynchronous} multi-valued Byzantine consensus, which requires randomized implementations~\cite{DBLP:conf/crypto/CachinKPS01}.  But the property applies equally to consensus protocols with \emph{partial synchrony}.

Among the asynchronous protocols, recent work by Abraham \etal~\cite{DBLP:conf/podc/AbrahamMS19} improves the expected communication (bit) complexity to $O(L n^2)$ from $O(L n^3)$ in the earlier work~\cite{DBLP:conf/crypto/CachinKPS01}, where $L$ is the maximal length of a proposed value.

In Dumbo-MVBA~\cite{DBLP:conf/podc/LuL0W20} the communication complexity  of this primitive is further reduced to $O(Ln)$ through erasure coding, where the input of each process is split into coded fragments, distributed to every process, and recovered later.

Byzantine consensus protocols in the partial-synchrony model can easily be enhanced to provide external validity, when each process verifies $P$ for every proposed value. For instance, the single-decision versions of PBFT~\cite{DBLP:journals/tocs/CastroL02} and of HotStuff~\cite{DBLP:conf/podc/YinMRGA19} achieve best-case complexities $O(L n^2)$ and $O(L n)$, respectively; these values increase by a factor of~$n$ in the worst case.

\subsection{Atomic Broadcast}

Atomic broadcast ensures that all processes deliver the same messages and that all messages are output in the same order.  This is equivalent to the processes agreeing on one sequence of messages that they deliver.  Atomic broadcast is also called ``total-order broadcast'' or simply ``consensus'' in the context of blockchains because it is equivalent to running a sequence of consensus instances.  Processes may broadcast a message~$m$ by invoking $\op{a-broadcast}(m)$, and the protocol outputs messages through $\op{a-deliver}(m)$ events.
\begin{definition}[Atomic Broadcast]\label{def:abc}
  A protocol for atomic broadcast satisfies the following properties:
 \begin{description}
  \item[Validity:]  If a correct process \op{a-broadcasts} a message $m$, then every correct process eventually \op{a-delivers}~$m$.
  \item[No duplication:] No message is \op{a-delivered} more than once.
  \item[Agreement:] If a message $m$ is \op{a-delivered} by some correct process, then $m$ is eventually \op{a-delivered} by every correct process.
  \item[Total order:] Let $m$ and $m'$ be two messages such that \ppi and \ppj are correct processes that \op{a-deliver} $m$ and $m'$. If \ppi \op{a-delivers} $m$ before $m'$, then \ppj also \op{a-delivers} $m$ before $m'$.
 \end{description}
\end{definition}

\section{Revisiting order fairness}
\label{sec:orderfairness}
  
In this section, we discuss the challenges of defining order fairness and highlight limitations of order fairness notions from previous works. We then introduce our refined notion of differential order-fair atomic broadcast. 

\subsection{Limitations}
\label{sec:limitations}

Defining a fair order for atomic broadcast in asynchronous networks is not straightforward since the processes might locally receive messages for broadcasting in different orders. We assume here that a correct process receives a payload to be broadcast (e.g., from a client) at the same time when it \op{a-broadcasts} it.
If a process broadcasts a payload message $m$ before a payload message $m'$, according to its local order, we denote this by $m \prec m'$.
Furthermore, we abandon the \emph{validity} property above in the context of atomic broadcast with order fairness and assume now that every payload message is a-broadcast by all correct processes. This corresponds to the implicit assumption made for deploying order-fair broadcast.

Even if all processes are correct, it can be impossible to define a fair order among all messages. This is shown by a result from social science, known as the \emph{Condorcet paradox}, which states that there exist situations that lead to non-transitive collective voting preferences even if the individual preferences are transitive.  Kelkar \etal~\cite{DBLP:conf/crypto/Kelkar0GJ20} apply this to atomic broadcast and show that delivering messages in a fair order is not always possible.  Their example considers three correct processes $p_1$, $p_2$, and $p_3$ that receive three payload messages $m_a$, $m_b$, and $m_c$. While $p_1$ receives these payload messages in the order \(m_a \prec m_b \prec m_c\), process~$p_2$ receives them as \(m_b \prec m_c \prec m_a\) and $p_3$ in the order \(m_c \prec m_a \prec m_b\). Obviously, a majority of the processes received $m_a$ before $m_b$, $m_b$ before $m_c$, but also $m_c$ before $m_a$, leading to a cyclic order. Consequently, a fair order cannot be specified even with only correct processes.

One way to handle situations with such cycles in the order is presented by Kelkar \etal~\cite{DBLP:conf/crypto/Kelkar0GJ20} with \emph{block-order fairness}: their protocol delivers a ``block'' of payload messages at once. Typically, a block will contain those payloads that are involved in a cyclic order. Their notion requires that if sufficiently many processes receive a payload $m$ before another payload $m'$, then no correct process delivers $m$ after $m'$, but they may both appear in the same block.
Even though the order among the messages within a block remains unspecified, the notion of block-order fairness respects a fair order up to this limit.

Kelkar \etal~\cite{DBLP:conf/crypto/Kelkar0GJ20} specify ``sufficiently many'' as a $\gamma$-fraction of \emph{all} processes, where $\gamma$ represents an order-fairness parameter such that $\frac{1}{2} < \gamma \leq 1$. More precisely, block-order fairness considers a number of processes $\eta$ that all receive (and broadcast) two payload messages $m$ and $m'$. Block-order fairness for atomic broadcast requires that whenever there are at least $\gamma \eta$ processes that receive $m$ before $m'$, then no correct process delivers $m$ after~$m'$ (but they may deliver $m$ and $m'$ in the same block).

Kelkar \etal~\cite{DBLP:conf/crypto/Kelkar0GJ20} explicitly count faulty processes for their definition.  Notice that this immediately leads to problems: If $\gamma \eta < 2f$, for instance, the notion relies on a majority of faulty processes, but no guarantees are possible in this case.  Therefore, we only count on events occurring at correct processes here and define a block-order fairness parameter \bargamma to denote the fraction of \emph{correct} processes that receive one message before the other.

Moreover, we assume w.l.o.g.\ that all correct processes eventually broadcast every payload, even if this is initially input by a single process only.  This simplifies the treatment compared to original block-order fairness, which considers only processes that broadcast \emph{both} payload messages, $m$ and $m'$~\cite{DBLP:conf/crypto/Kelkar0GJ20}.  Our simplification means that a correct process that has received only one payload will receive the other payload as well later.  This process should eventually include also the second payload for establishing a fair order.  It corresponds to how atomic broadcast is used in practice; hence, we set $\eta = n-f$. In asynchronous networks, furthermore, one has to respect $f$ additional correct processes that may be delayed. Their absence reduces the strength of the formal notion of block-order fairness in asynchronous networks even more.

In the following, we discuss the range of achievable values for $\bargamma$. Since we focus on models that allow asynchrony, we assume $n > 3f$ throughout this work.  Fundamental results on validity notions for Byzantine consensus in asynchronous networks have been obtained by Fitzi and Garay~\cite{DBLP:conf/podc/FitziG03}.  Recall that a consensus protocol satisfies \emph{termination}, \emph{integrity}, and \emph{agreement} according to Definition~\ref{def:vbc}. \emph{Standard consensus} additionally satisfies:
\begin{description}
\item[Validity:] If all correct processes propose~$v$, then all correct processes decide~$v$.
\end{description}
Notice that this leaves the decision value completely open if only one correct process proposes something different.  In their notion of \emph{strong consensus}, however, the values proposed by correct processes must be better respected, under more circumstances:
\begin{description}
\item[Strong validity:] If a correct process decides~$v$, then some correct process has proposed~$v$.
\end{description}
Unfortunately, strong consensus is not suitable for practical purposes because Fitzi and Garay~\cite[Thm.~8]{DBLP:conf/podc/FitziG03} also show that if the proposal values are taken from a domain~\CV, then the resilience depends on $|\CV|$.  In particular, strong consensus is only possible if $n > |\CV| f$.

Related to this, they also introduce \emph{$\delta$-differential consensus}, which respects how many times a value is proposed by the correct processes. This notion ensures, in short, that the decision value has been proposed by ``sufficiently many'' correct processes compared to how many processes proposed some different value.  More precisely, for an execution of consensus and any value~$v \in \CV$, let $c(v)$ denote the number of correct processes that propose~$v$:
\begin{description}
\item[$\delta$-differential validity:] If a correct process decides $v$, then every other value~$w$ proposed by some correct process satisfies $c(w) \leq c(v) + \delta$.
\end{description}
To summarize, whereas the standard notion of Byzantine consensus requires that \emph{all} correct processes start with the same value in order to decide on one of the correct processes' input, strong consensus achieves this in any case. It requires that the decision value has been proposed by \emph{some} correct process. However, it does not connect the decision value to how many correct processes have proposed it. Consequently, strong consensus may decide a value proposed by just one correct process. Differential consensus, finally, makes the initial plurality of the decision value explicit. For $\delta = 0$, in particular, the decision value must be one of the proposed values that is most common among the correct processes. More importantly, differential validity can be achieved under the usual assumption that $n>3f$.

We now give another characterization of $\delta$-differential validity.
For a particular execution of some (asynchronous) Byzantine consensus
protocol, let $v^*$ be (one of) the value(s) proposed most often by correct
processes, i.e.,
\[
  v^* = \arg \max_v c(v).
\]

\begin{lemma}\label{lem:differential}
  A Byzantine consensus protocol satisfies $\delta$-differential validity
  if and only if in every one of its executions, it never decides a value
  $w$ with $c(w) < c(v^*) - \delta$.
\end{lemma}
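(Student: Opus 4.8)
The plan is to prove the two implications separately, relying only on the definition of $v^*$ as (one of) the value(s) proposed most often by correct processes, together with the elementary fact that in any execution of consensus every correct process proposes some value; since $n > 3f$ there is at least one correct process, so $c(v^*) \geq 1$ and $v^*$ is in particular proposed by some correct process. That last observation is what makes the definition of $\delta$-differential validity applicable to $v^*$.

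For the forward direction, I would assume the protocol satisfies $\delta$-differential validity and consider an arbitrary execution in which some correct process decides a value $w$. If $w = v^*$, then $c(w) = c(v^*) \geq c(v^*) - \delta$ because $\delta \geq 0$, so there is nothing to show. Otherwise $v^*$ is a value \emph{different} from $w$ that is proposed by some correct process, so the differential-validity clause, instantiated with decided value $w$ and other value $v^*$, gives $c(v^*) \leq c(w) + \delta$, that is, $c(w) \geq c(v^*) - \delta$. Hence the protocol never decides a $w$ with $c(w) < c(v^*) - \delta$.

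For the converse, I would assume that in every execution the protocol never decides a value $w$ with $c(w) < c(v^*) - \delta$, and suppose that in some execution a correct process decides $v$. By the assumption applied to this execution, $c(v) \geq c(v^*) - \delta$. Now let $w$ be any value proposed by some correct process; by maximality of $v^*$ we have $c(w) \leq c(v^*)$, and chaining the two inequalities yields $c(w) \leq c(v^*) \leq c(v) + \delta$, which is exactly the statement of $\delta$-differential validity.

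The proof is essentially bookkeeping, so I do not expect a genuine obstacle; the only points that need a moment of care are the boundary case $w = v^*$ in the forward direction (where the qualifier ``other'' in the definition makes the clause vacuous and $\delta \geq 0$ is used) and the implicit fact that $v^*$ is indeed proposed by a correct process, which is what licenses invoking the definition in the first place.
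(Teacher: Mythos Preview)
Your proof is correct and follows essentially the same approach as the paper: both directions hinge on instantiating the differential-validity clause with the maximal value~$v^*$, and the converse is the same chain of inequalities (the paper phrases it as a contrapositive, you give it directly). Your explicit handling of the boundary case $w = v^*$ and the remark that $v^*$ is actually proposed by a correct process are points the paper leaves implicit, but the underlying argument is identical.
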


\begin{proof}
  Assume first that the protocol satisfies $\delta$-differential validity
  and a correct process decides any value~$v$ in the domain.  Then every
  other value~$w$ proposed by a correct process satisfies
  $c(w) \leq c(v) + \delta$.  In particular, this implies
  $c(v^*) \leq c(v) + \delta$, which is equivalent to,
  $c(v) \geq c(v^*) - \delta$.  Hence, the protocol \emph{never} decides a
  value $x$ with $c(x) < c(v^*) - \delta$.
  
  To show the reverse direction, suppose $x$ is such that
  $c(x) < c(v^*) - \delta$ and a correct process decides~$x$.  This does
  not satisfy $\delta$-differential validity because also $v^*$ has been
  proposed by a correct process but $c(v^*) > c(x) + \delta$.
\end{proof}

For consensus with a \emph{binary} domain $\CV = \{0, 1\}$, this means that
a consensus protocol satisfies $\delta$-differential validity if and only
if in every one of its executions with, say, $c(0) > c(1) + \delta$, every
correct process decides~$0$.

No asynchronous consensus algorithm for agreeing on the value proposed by a simple majority of correct processes exists, however.  Fitzi and Garay \cite[Thm.~11]{DBLP:conf/podc/FitziG03} prove that $\delta$-differential consensus in asynchronous networks is \emph{not possible} for $\delta < 2f$:
\begin{theorem}[\cite{DBLP:conf/podc/FitziG03}]\label{thm:fitzi}
  In an asynchronous network, $\delta$-differential consensus is achievable only if $\delta \geq 2f$.
\end{theorem}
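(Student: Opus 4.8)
The plan is to prove this as an impossibility result by the classical indistinguishability (hybrid) technique for asynchronous Byzantine protocols, instantiated so as to keep track of the multiplicities $c(\cdot)$ of proposed values. First I would reduce to the binary domain $\CV=\{0,1\}$: by Lemma~\ref{lem:differential} a $\delta$-differential protocol must output $v$ in every execution with $c(v)\ge c(1-v)+\delta+1$, so it suffices to exhibit, for $\delta\le 2f-1$, two executions with contradictory forced outputs that the protocol cannot tell apart. Throughout I assume $f\ge 1$ (the claim is vacuous for $f=0$) and $n>3f$.

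Second, I would build two endpoint executions. Let $E^{(0)}$ be a ``$0$-forcing'' execution: pick a correct process $p$; let $f$ Byzantine processes behave toward $p$ exactly as correct processes that propose $0$; delay past $p$'s decision all messages from a disjoint set of $f$ correct processes that propose $1$; and let the remaining correct processes propose $0$. Then $p$'s view coincides with that of a legitimate execution $\hat E^{(0)}$ in which the $f$ equivocators are replaced by honest $0$-proposers and the $f$ delayed processes have crashed; in $\hat E^{(0)}$ one has $c(0)-c(1)\ge 2f+1>\delta$, so by Lemma~\ref{lem:differential} every correct process, in particular $p$, decides $0$, and by indistinguishability $p$ also decides $0$ in $E^{(0)}$. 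A symmetric construction yields a ``$1$-forcing'' execution $E^{(1)}$ (with a possibly different designated process) in which the analogous process decides $1$. The point is that the gap of $2f$ separating the forced value from the ambiguous zone is assembled from two independent $f$-sized levers — equivocation of the $f$ Byzantine parties and asynchronous suppression of $f$ correct parties — which is exactly why the threshold is $2f$.

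Third, since $E^{(0)}$ and $E^{(1)}$ are not directly indistinguishable to a single process, I would connect them by a finite chain $E^{(0)}=H_0,H_1,\dots,H_m=E^{(1)}$ in which each consecutive pair $H_j,H_{j+1}$ is indistinguishable to some process $p_j$ that is correct in both: moving from $H_j$ to $H_{j+1}$ only the configuration ``behind $p_j$'s back'' changes — the proposal of a correct process whose messages to $p_j$ are delayed past its decision is flipped, or a correct party heard by $p_j$ is turned Byzantine (keeping its messages to $p_j$ frozen) while some currently-Byzantine party is promoted to a matching correct proposer — so that every $H_j$ remains a legal asynchronous execution with at most $f$ faults (here $n>3f$ gives enough room for the $f$ equivocators, the $f$ suppressed processes, and $p_j$ to be pairwise distinct). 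Then for each $j$, process $p_j$ decides the same value in $H_j$ and $H_{j+1}$ because its view is identical, and by Agreement all correct processes in $H_j$ decide that same value; chaining these equalities through the shared executions shows every correct process decides one and the same value $d$ in all of $H_0,\dots,H_m$. But $H_0=E^{(0)}$ forces $d=0$ and $H_m=E^{(1)}$ forces $d=1$, a contradiction; Termination guarantees these decisions actually occur.

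I expect the construction of the chain to be the main obstacle: one must specify the intermediate executions and their message schedules so that (a) consecutive executions really are indistinguishable to the designated correct process, (b) each intermediate execution is admissible (at most $f$ Byzantine parties and a schedule under which all correct processes terminate), and (c) the two endpoints can be linked while respecting these constraints at every step — and the bookkeeping of how $c(0)$ and $c(1)$ may move under a single behind-the-back change is precisely what makes the argument go through for $\delta\le 2f-1$ and fail for $\delta\ge 2f$. Since this is Theorem~11 of Fitzi and Garay~\cite{DBLP:conf/podc/FitziG03}, an alternative is simply to invoke their proof; the sketch above is the route I would take to reconstruct it.
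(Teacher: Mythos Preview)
The paper does not prove this theorem: it is imported from Fitzi and Garay~\cite{DBLP:conf/podc/FitziG03} (their Theorem~11) and invoked as a black box in the reductions establishing Theorems~\ref{thm:gamma} and~\ref{thm:mu}. There is thus no in-paper proof to compare against; the paper's treatment is a bare citation.

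Your proposal goes beyond this and sketches a reconstruction of the original argument. The high-level plan---two forcing executions, each assembled from $f$ equivocating Byzantine processes together with $f$ asynchronously suppressed correct processes, connected by a hybrid chain with per-step indistinguishability to some shared correct party---is the standard template for asynchronous Byzantine lower bounds, and your reading of the $2f$ threshold as the sum of the two independent $f$-sized levers (equivocation plus suppression) is exactly the right intuition. As you yourself flag, the substantive work lies entirely in exhibiting the chain so that every intermediate execution respects the $f$-fault budget while consecutive executions remain indistinguishable to the designated process; your sketch names the admissible moves (flip a suppressed proposal, swap a Byzantine party for a matching correct one) but does not carry out the bookkeeping, so what you have is an outline rather than a proof. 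For the purposes of the present paper a citation suffices, and that is all the authors provide.
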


The above discussion already hints at issues with achieving fair order in asynchronous systems.  Recall that Kelkar \etal~\cite{DBLP:conf/crypto/Kelkar0GJ20} present atomic broadcast protocols with block-order fairness for the asynchronous setting with order-fairness parameter~$\gamma$ (whose definition includes faulty processes). The corruption bound is stated as
\begin{equation}\label{equ:corruption}
  n > \frac{4f}{2\gamma-1}.
\end{equation}
For $\gamma = 1$, which ensures fairness only in the most clear cases, there are $n > 4f$ processes required.  For values of $\gamma$ close to~$\frac{1}{2}$, the condition becomes prohibitive for practical solutions.

In fact, even when using our interpretation, $\bargamma$ cannot be too close to~$\frac{1}{2}$, as the following result shows.  It rules out the existence of $\bargamma$-block-order-fair atomic broadcast in asynchronous or eventually synchronous networks for $\bargamma < \frac{1}{2} + \frac{f}{n-f}$.

\begin{theorem}\label{thm:gamma}
  In an asynchronous network with $n$ processes and $f$ faults, implementing atomic broadcast with $\bargamma$-fair block order is not possible unless $\bargamma \geq \frac{1}{2} + \frac{f}{n-f}$.
\end{theorem}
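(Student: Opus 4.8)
The plan is to reduce to the Fitzi--Garay impossibility of low-threshold differential consensus (Theorem~\ref{thm:fitzi}): assuming a $\bargamma$-block-order-fair atomic broadcast protocol with $\bargamma < \frac12 + \frac{f}{n-f}$, I would construct a $\delta$-differential binary consensus protocol with $\delta < 2f$ and derive a contradiction. Write $N = n - f$ for the number of correct processes. Since in this setting every correct process eventually a-broadcasts every payload, ``input $0$'' will mean that a process a-broadcasts a fixed payload $m$ before a fixed payload $m'$, and ``input $1$'' the reverse order; thus the number of correct processes with input $0$ equals the number $c(0)$ of correct processes that a-broadcast $m$ before $m'$, and similarly for $c(1)$, with $c(0) + c(1) = N$.

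The consensus protocol is: on input $0$ a process a-broadcasts $m$ and then $m'$, on input $1$ it a-broadcasts $m'$ and then $m$; it waits until both payloads have been a-delivered and then decides $0$ if $m$ was a-delivered (strictly) before $m'$, and $1$ otherwise. \emph{Termination} holds because each payload is a-broadcast by at least one correct process and hence eventually a-delivered by all; \emph{integrity} is immediate; \emph{agreement} follows from total order of atomic broadcast, since every correct process a-delivers $m$ and $m'$ in the same relative order. For validity, let $a = c(0)$ and $b = c(1) = N - a$; I would take $\delta$ to be just below $(2\bargamma - 1)N$, concretely $\delta = 2\lceil\bargamma N\rceil - N - 1$, checking the integer rounding carefully. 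Then $c(0) > c(1) + \delta$ gives $2a > N + \delta$, hence $a \ge \lceil\bargamma N\rceil$, so block-order fairness forbids any correct process from a-delivering $m$ after $m'$ and every correct process decides $0$; the symmetric argument handles $c(1) > c(0) + \delta$, so by Lemma~\ref{lem:differential} the protocol is $\delta$-differentially valid. Now $\bargamma < \frac12 + \frac{f}{n-f}$ is equivalent to $(2\bargamma - 1)N < 2f$, which forces $\delta < 2f$, contradicting Theorem~\ref{thm:fitzi}; contrapositively, a $\bargamma$-block-order-fair atomic broadcast forces $\delta \ge 2f$, i.e.\ $\bargamma \ge \frac12 + \frac{f}{n-f}$. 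The eventually synchronous case is identical, since a partially synchronous network may behave asynchronously throughout the impossibility argument.

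The step I expect to be the main obstacle is the clause ``$m$ and $m'$ may be delivered together'': block-order fairness guarantees only that $m$ is not a-delivered \emph{strictly} after $m'$, so in principle the atomic broadcast could place $m$ and $m'$ in the same block, leaving the induced consensus without a well-defined bit to decide. I would address this by arranging the reduction so that the two payloads provably land in distinct blocks in the executions that matter --- for instance by having each process a-broadcast its second payload only after it has a-delivered the first, so that the first payload's block has already been delivered everywhere before the second is even a-broadcast and therefore cannot contain it --- and by resolving any remaining ambiguous configurations with a deterministic tie-break that, because $|c(0) - c(1)| < 2f$ there, does not affect $\delta$-differential validity. An alternative is to bypass the reduction entirely and run a direct indistinguishability argument in the style of Fitzi and Garay, pairing an execution with $a \ge \lceil\bargamma N\rceil$ (which forbids delivering $m$ after $m'$) against one with $b \ge \lceil\bargamma N\rceil$ (which forbids delivering $m'$ after $m$) that differ in at most $2f$ process behaviours, made indistinguishable to every correct process by $f$ Byzantine processes plus $f$ delayed correct ones; the subtlety is again that this only yields a contradiction once one has also argued that the forced common delivery order is not the degenerate ``same block'' one. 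Pinning down this point, together with the integer rounding in the choice of $\delta$, is where the real care is needed.
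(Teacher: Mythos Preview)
Your approach is essentially the paper's: reduce to $\delta$-differential consensus by having each process a-broadcast the two payloads in an order determined by its input bit, decide according to which is a-delivered first, and compute that $\bargamma < \frac12 + \frac{f}{n-f}$ forces $\delta < 2f$, contradicting Theorem~\ref{thm:fitzi}. The paper's reduction is slightly terser (a process a-broadcasts only its proposed value, relying on the standing assumption that every correct process eventually a-broadcasts every payload to get the second one out), and it sets $c(m) = \bargamma(n-f)$ directly rather than working through integer rounding as you do.

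The ``same block'' issue you flag as the main obstacle is real, and it is worth noting that the paper's proof simply glosses over it: the paper asserts that ``the atomic broadcast protocol a-delivers $m$ before $m'$ by the $\bargamma$-fair block order property,'' whereas block-order fairness as stated only rules out $m$ being delivered \emph{after} $m'$ and explicitly permits joint delivery. So you have correctly identified a gap that the paper does not close; your proposed workarounds (delaying the second a-broadcast until after the first is a-delivered, or a direct indistinguishability argument) are reasonable directions, and in fact go beyond what the published proof provides.
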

\begin{proof}
  Towards a contradiction, suppose there is an atomic broadcast protocol
  ensuring $\bargamma$-fair block order with
  $\frac{1}{2} < \bargamma < \frac{1}{2} + \frac{f}{n-f}$.  We will transform
  this into a differential consensus protocol that violates
  Theorem~\ref{thm:fitzi}.

  The consensus protocol works like this.  All processes initialize the
  atomic broadcast protocol.  Upon $\op{propose}(v)$ with some value~$v$, a
  process simply \op{a-broadcasts} $v$.  When the first value $v'$ is
  \op{a-delivered} by atomic broadcast to a process, the process executes
  $\op{decide}(v')$ and terminates.
  
  Consider any execution of this protocol such that all correct processes
  propose one of two values, $m$ or $m'$.  Suppose w.l.o.g.\ that
  $c(m) = \bargamma (n-f)$ and $c(m') = (1-\bargamma) (n-f)$, i.e., $m$ is
  proposed $c(m)$ times by correct processes and more often than $m'$,
  since $\bargamma > \frac{1}{2}$.  It follows that $\bargamma (n-f)$ correct
  processes \op{a-broadcast} $m$ before $m'$ and $(1-\bargamma) (n-f)$ correct
  processes \op{a-broadcast}~$m'$ before~$m$.

  According to the properties of atomic broadcast all correct processes
  \op{a-deliver} the same value first in every execution.  Moreover, the
  atomic broadcast protocol \op{a-delivers} $m$ before $m'$ by the
  $\bargamma$-fair block order property.  This implies that the consensus
  protocol decides~$m$ in every execution and never~$m'$.  Since no further
  restrictions are placed on $m$ and on $m'$, this consensus protocol
  actually ensures $\delta$-differential validity for some
  $\delta < c(m) - c(m')$ by Lemma~\ref{lem:differential}.
  
  However, the $c(m)$ and $c(m')$ satisfy, respectively,
  \[
    \begin{array}{ccccccl}
      c(m) & = &\bargamma(n-f)
      & < &\Bigl(\frac{1}{2} + \frac{f}{n-f}\Bigr) (n-f)
      & = &\frac{n+f}{2} \\
      c(m') & = &(1-\bargamma)(n-f)
      & > &\Bigl(1 - \frac{1}{2} - \frac{f}{n-f}\Bigr) (n-f)
      & = &\frac{n-3f}{2}
    \end{array}
  \]
  and, therefore,
  $\delta < c(m) - c(m') < \frac{n+f}{2} - \frac{n-3f}{2} = 2f$.  But
  $\delta$-differential asynchronous consensus is only possible when
  $\delta \geq 2f$, a contradiction.
\end{proof}

\subsection{Differential Order-Fairness}
\label{sec:differential}

The limitations discussed above have an influence on order fairness. The condition on $\delta$ to achieve \emph{$\delta$-differential consensus} directly impacts any measure of fairness.
It becomes clear that a \emph{relative} notion for block-order fairness, defined through a fraction like~$\bargamma$, may not be expressive enough.

We now start to define our notion of \emph{order-fair atomic broadcast}; it has almost the same interface as regular atomic broadcast. The primitive is accessed with $\op{of-broadcast}(m)$ for broadcasting a payload message~$m$ and it outputs payload messages through $\op{of-deliver}(M)$ events, where $M$ is a \emph{set} of payloads delivered at the same time; $M$ corresponds the block of block-order fairness. We want to count the number of correct processes that \op{of-broadcast} a message $m$ before another message $m'$ and introduce a function
\[
  b : \CM \times \CM \to \BN
\]
for all $m$ and $m'$ that were ever \op{of-broadcast} by correct processes. The value $b(m,m')$ denotes the \emph{number of correct processes} that \op{of-broadcast} $m$ before $m'$ in an execution.  As above we assume w.l.o.g.\ that a correct process will \op{of-broadcast} $m$ and $m'$ eventually and that, therefore, $b(m, m') + b(m',m) = n-f$.

Can we achieve that if $b(m,m') > b(m',m)$, i.e., when there are more correct processes that \op{of-broadcast}  message $m$ before $m'$ than correct processes that \op{of-broadcast} $m'$ before $m$, then no correct process will \op{of-deliver} $m'$ before $m$? Using a reduction from $\delta$-differential consensus, as in the previous result, we can show that this condition is too weak.

\begin{theorem}\label{thm:mu}
  Consider an atomic broadcast protocol that satisfies the following notion
  of order fairness for some $\mu \geq 0$:
  \begin{description}
  \item[Weak differential order fairness:] For any $m$ and $m'$, if
    $b(m,m') > b(m',m) + \mu$, then no correct process \op{a-delivers} $m'$
    before~$m$.
  \end{description}
  Then it must hold $\mu \geq 2f$.
\end{theorem}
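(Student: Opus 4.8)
The plan is to reuse the reduction strategy from the proof of Theorem~\ref{thm:gamma}: assume for contradiction that such a protocol exists with $\mu < 2f$, and turn it into a $\mu$-differential Byzantine consensus protocol over a binary domain, which Theorem~\ref{thm:fitzi} rules out because $\mu < 2f$. (Note that $2f < n - f$ since $n > 3f$, so the executions invoked below actually exist and the argument is not vacuous.)

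Concretely, fix two distinct payloads and identify them with the values of a binary consensus domain $\CV = \{m, m'\}$. The consensus protocol runs the given order-fair atomic broadcast: on $\op{propose}(v)$ a process $\op{of-broadcast}$s $v$ and, to respect the standing convention that every correct process eventually broadcasts both payloads, it also $\op{of-broadcast}$s the other value of $\CV$ as soon as it first learns of it; when the first payload $v'$ is $\op{of-delivered}$ (inside a set $M$, from which $v'$ is chosen arbitrarily if $|M| > 1$, although we will in fact argue $m$ and $m'$ are never delivered out of order), the process invokes $\op{decide}(v')$ and halts. Termination follows from validity of atomic broadcast, integrity from halting after a single decision, and agreement from total order together with agreement of atomic broadcast, since all correct processes $\op{of-deliver}$ the same payload first.

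The key step is to relate $b(\cdot,\cdot)$ to the vote counts $c(\cdot)$. Because each correct process $\op{of-broadcast}$s its proposed value first and the other value only later, a correct process proposing $m$ contributes to $b(m,m')$ and one proposing $m'$ contributes to $b(m',m)$; hence $b(m,m') = c(m)$ and $b(m',m) = c(m')$. Therefore, in any execution with $c(m) > c(m') + \mu$ we have $b(m,m') > b(m',m) + \mu$, so by weak differential order fairness no correct process $\op{of-delivers}$ $m'$ before $m$; combined with validity and total order this makes $m$ the first delivered payload at every correct process, so the consensus protocol decides $m$. By the characterization in Lemma~\ref{lem:differential}, applied to the binary domain with $v^\ast$ the more frequently proposed of $m$ and $m'$, the constructed protocol satisfies $\mu$-differential validity. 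Theorem~\ref{thm:fitzi} then forces $\mu \geq 2f$, contradicting the assumption, and the claim follows.

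The main obstacle I anticipate is the bookkeeping behind the identity $b(m,m') = c(m)$: one must argue carefully that the reduction never lets a process $\op{of-broadcast}$ the ``wrong'' value first and never omits a value, so that $b(m,m') + b(m',m) = n-f$ and no stray correct votes leak into the opposite count. Everything else is a direct transcription of the proof of Theorem~\ref{thm:gamma}, with the relative parameter $\bargamma$ replaced by the additive slack $\mu$.
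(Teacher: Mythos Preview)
Your proposal is correct and follows essentially the same reduction as the paper: assume $\mu < 2f$, build binary consensus by \op{of-broadcasting} the proposal and deciding on the first \op{of-delivered} value, identify $b(m,m') = c(m)$ and $b(m',m) = c(m')$, and invoke Lemma~\ref{lem:differential} plus Theorem~\ref{thm:fitzi} for the contradiction. The only cosmetic difference is that you spell out the ``also broadcast the other value later'' step to honor the standing assumption that every correct process eventually \op{of-broadcasts} both payloads, which the paper leaves implicit.
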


\begin{proof}
  Towards a contradiction, suppose there is an atomic broadcast protocol,
  which ensures that for all payload messages $m$ and $m'$ with
  $b(m,m') > b(m',m) + \mu$ and $\mu \geq 0$, no correct process
  \op{a-delivers} $m'$ before $m$ and that $\mu < 2f$.
  We will transform this into a differential consensus protocol that
  violates Theorem~\ref{thm:fitzi}.
 
  The consensus protocol works like this. All processes initialize the
  order-fair atomic broadcast protocol. Upon \op{propose}$(v)$ with some
  value $v$, a process simply \op{of-broadcasts} $v$. When the first value
  $v'$ is \op{of-delivered} to a process, the process executes
  \op{decide}$(v')$ and terminates.
 
  Consider any execution of this protocol such that all correct processes
  propose one of two values, $m$ or $m'$.  Suppose w.l.o.g.\ that $m$ is
  proposed $c(m)$ times by correct processes and more often than $m'$,
  which is proposed $c(m')$ times, with $c(m) + c(m') = n - f$ and that
  $c(m) > c(m') + \mu$.  It follows that $b(m,m') = c(m)$ correct processes
  \op{of-broadcast} $m$ before $m'$ and $b(m',m) = c(m')$ correct processes
  \op{of-broadcast} $m'$ before $m$, hence, $b(m,m') > b(m',m) + \mu$.
 
  According to the properties of atomic broadcast, all processes
  \op{of-deliver} the same value first in every execution. Moreover, the
  protocol \op{of-delivers} $m$ before $m'$ because
  $b(m,m') > b(m',m) + \mu$.  This implies that the consensus protocol
  decides $m$ in every such execution. Since no further restrictions are
  placed on $m$ and $m'$ and since $c(m) - c(m') > \mu$, this consensus
  protocol actually implements $\mu$-differential consensus by
  Lemma~\ref{lem:differential}.  However, achieving $\mu$-differential
  asynchronous consensus requires that $\mu \geq 2f$ according to
  Theorem~\ref{thm:fitzi}.  But $\mu < 2f$ by the above assumption.  This
  is a contradiction.
\end{proof}

On the basis of this result, we now formulate our notion of
\emph{$\kappa$-differentially order-fair atomic broadcast}, using a
fairness parameter $\kappa \geq 0$ to express the strength of the fairness.
Smaller values of $\kappa$ ensure stronger fairness in the sense of how
large the majority of processes that \op{of-broadcast} some $m$ before $m'$
must be to ensure that $m$ will be \op{of-delivered} before $m'$ and in a
fair order.

Recall that throughout this work, we assume that if one correct process
\op{of-broadcasts} some payload~$m$, then every correct process eventually
also \op{of-broadcasts}~$m$.
For reasons that are discussed later (in the remarks after the protocol
description in Section~\ref{subsection:details}), we use a weaker formal
notion of validity, which considers executions with only correct processes.

\begin{definition}[$\kappa$-Differentially Order-Fair Atomic Broadcast] \label{def:kappa-of}
  A protocol for \emph{$\kappa$-differentially order-fair atomic broadcast}
  satisfies the properties \emph{no duplication, agreement} and \emph{total order} of \emph{atomic broadcast} and additionally:
  \begin{description}
  \item [Weak validity:] If all processes are correct and \op{of-broadcast}
    a finite number of messages, then every correct process eventually
    \op{of-delivers} all of these \op{of-broadcast} messages.
  \item [$\kappa$-differential order fairness:] If
    $b(m,m') > b(m',m) + 2f + \kappa$, then no correct process
    \op{of-delivers} $m'$ before $m$.
  \end{description}
\end{definition}
Compared to the above notion of weak differential order fairness, we have
$\kappa = \mu - 2f$.  We show in the next section how to implement
$\kappa$-differentially order-fair atomic broadcast.

\section{Quick order-fair atomic broadcast protocol}
\label{sec:protocol}

This section presents first an overview of our \textit{quick order-fair atomic broadcast} algorithm in Section~\ref{subsection:overview}. A detailed description of the implementation follows in Section~\ref{subsection:details}, along with the pseudocode in Algorithm~\ref{alg:alg1}--\ref{alg:alg2}.  Finally, the complexity of the algorithm is discussed in Section \ref{subsection:complexity}.

\subsection{Overview}
\label{subsection:overview}

The protocol concurrently runs a Byzantine FIFO consistent broadcast channel (BCCH) and proceeds in rounds of consensus.  BCCH allows processes to deliver multiple messages consistently.  An incoming \op{of-broadcast} event with a payload message $m$ triggers BCCH and \op{\bcch-broadcasts}~$m$ to the network.  Additionally, every process keeps a local vector clock that counts the payloads that have been \op{\bcch-delivered} from each sending process.  Every process also maintains an array of lists \msgs such that $\msgs[i]$ records all \op{\bcch-delivered} payloads from~$p_i$.

When a process \op{\bcch-delivers} the payload message $m$, it increments the corresponding vector-clock entry and appends $m$ to the appropriate list in \msgs.
As soon as sufficiently many new payloads are found in \msgs, a new round starts.  Each process signs its vector clock and sends it to all others. The received vector clocks are collected in a matrix, and once $n-f$ valid vector clocks are recorded, a new validated Byzantine consensus (VBC) instance is triggered. The process proposes the matrix and the signatures for consensus, and VBC decides on a common matrix with valid signatures.
This matrix defines a \emph{cut}, which is a vector of indices, with one index per process, such that the index for $p_j$ determines an entry in $\msgs[j]$ up to which payload messages are considered for creating the fair order in the round.  It may be that the index points to messages that a process~\ppi does not store in $\msgs[j]$ because they have not been \op{bcch-delivered} yet.  When the process detects such a missing payload, it asks all other processes to send the missing payload directly and in a verifiable way, such that every process will store all payloads up to the cut in \msgs.

Once all processes received the payloads up to the cut, the algorithm starts to build a graph that represents the dependencies among messages that must be respected for a fair order.  This graph resembles the one used in Aequitas~\cite{DBLP:conf/crypto/Kelkar0GJ20}, but its semantics and implementation differ.  The vertices in the graph here are all \emph{new} payload messages defined by the cut and an edge $(m, m')$ indicates that $m$ should at most be \op{of-delivered} before $m'$.

The graph results from two steps.  In the first step, the process creates a vertex for every payload message that appears in a distinct lists in \msgs and it is not yet \op{of-delivered}.  In the second step, the algorithm builds a matrix $M$ such that $M[m][m']$ counts how many times $m$ appears before $m'$ in \msgs (up to the cut).  $M[m][m']$ can be interpreted as \emph{votes}, counting how many processes want to order $m$ before~$m'$.  Notice that entries of $M$ exist only for $m$ and $m'$ where at least one of $M[m][m']$ and $M[m'][m]$ is non-zero.

If the \emph{difference} between entries $M[m][m']$ and $M[m'][m]$ is large enough, then the protocol adds a directed edge $(m,m')$ to the graph.  The edge indicates that $m'$ must not be \op{of-delivered} before $m$.  More precisely, assuming that messages $m$ and $m'$ have been observed by at least $n-f$ processes, such an edge is added for all $m$ and $m'$ with $M[m][m'] > M[m'][m] - f + \kappa$.  The condition is explained through the following result.

\begin{lemma} \label{lem:M}
  If $b(m,m') > b(m',m) +2f + \kappa$, then $M[m][m'] > M[m'][m] - f + \kappa$.
\end{lemma}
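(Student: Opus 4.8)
My plan is to prove Lemma~\ref{lem:M} by a counting argument that converts the ground-truth quantities $b(m,m')$ and $b(m',m)$ into matching bounds on the observed quantities $M[m][m']$ and $M[m'][m]$. The bridge between the two is the behaviour of the BCCH channel: since an \op{of-broadcast} of a payload triggers a \op{bcch-broadcast} of it, and BCCH guarantees FIFO delivery and consistency, every correct process delivers the payloads of any fixed sender $p_j$ in the exact order in which $p_j$ \op{of-broadcast} them. Hence $\msgs[j]$, and therefore its prefix up to the agreed cut index $\cut[j]$ (which is the same at all correct processes because the cut is decided by a consensus instance), lists $p_j$'s payloads in $p_j$'s local broadcast order. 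Consequently, for a \emph{correct} sender $p_j$ whose cut-prefix contains both $m$ and $m'$, the pair contributes $+1$ to $M[m][m']$ precisely when $p_j$ \op{of-broadcast} $m$ before $m'$ and $+1$ to $M[m'][m]$ otherwise; a Byzantine sender can add to at most one of the two counts, and since correct senders never replay a payload, only the at most $f$ Byzantine lists require this cruder treatment.

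From this observation I would extract the two bounds that carry the argument. For the upper bound, the only senders that can place $m'$ before $m$ in their cut-prefix are the $b(m',m)$ correct processes that \op{of-broadcast} $m'$ before $m$ together with at most $f$ Byzantine senders, so $M[m'][m] \le b(m',m) + f$. For the lower bound, I would invoke the standing assumption that $m$ and $m'$ have each been observed by at least $n-f$ processes and conclude, by inclusion--exclusion, that at least $n-2f$ of the sender lists contain \emph{both} $m$ and $m'$ below their cut index; hence at most $2f$ senders miss one of them, and in particular at most $2f$ of the $b(m,m')$ correct processes that \op{of-broadcast} $m$ before $m'$ fail to contribute to $M[m][m']$, giving $M[m][m'] \ge b(m,m') - 2f$.

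Combining the two bounds with the hypothesis $b(m,m') > b(m',m) + 2f + \kappa$ then yields
\begin{align*}
  M[m][m'] - M[m'][m]
  &\ge \bigl(b(m,m') - 2f\bigr) - \bigl(b(m',m) + f\bigr) \\
  &= \bigl(b(m,m') - b(m',m)\bigr) - 3f \;>\; (2f + \kappa) - 3f \;=\; \kappa - f,
\end{align*}
which is exactly $M[m][m'] > M[m'][m] - f + \kappa$. I expect the only delicate point to be the bookkeeping around the cut: one must pin down what ``observed by at least $n-f$ processes'' means in the protocol and argue that it indeed forces at least $n-2f$ sender lists to contain both payloads below their cut, and one must check that Byzantine senders who resend old payloads cannot inflate a count beyond the single per-sender vote they are entitled to. Everything else is elementary arithmetic, and it also explains why the conclusion features the threshold $M[m'][m] - f + \kappa$ rather than the naive $M[m'][m] + \kappa$: the total loss of $3f$ splits into $2f$ for correct senders whose cut is too shallow to witness both messages and $f$ for Byzantine senders.
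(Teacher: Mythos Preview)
Your proposal is correct and follows essentially the same approach as the paper: derive the upper bound $M[m'][m] \le b(m',m) + f$ (Byzantine senders may contribute at most $f$ spurious votes) and the lower bound $M[m][m'] \ge b(m,m') - 2f$ (at most $2f$ of the correct ``$m$-before-$m'$'' senders fail to witness both payloads below the cut), then chain them with the hypothesis. The paper's proof is terser about the BCCH/cut bookkeeping and phrases the lower bound as $b(m,m') \le M[m][m'] + 2f$, but the arithmetic and the decomposition of the $3f$ loss into $f + 2f$ are identical to yours.
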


\begin{proof}
  At least $M[m][m']-f$ correct processes have \op{of-broadcast} $m$ before $m'$ because $M[m][m']$ may include reports about $m$ and $m'$ in \var{msgs} from up to $f$ incorrect processes. In other words,
  \begin{align*}
    b(m,m') \geq M[m][m']-f
    \quad\Longleftrightarrow\quad
    M[m][m'] \leq b(m,m')+f
  \end{align*}
  At most $M[m][m']+2f$ correct processes have \op{of-broadcast} $m$ before $m'$ because $M[m][m']$ may include reports about $m$ or $m'$ in \var{msgs} from up to $f$ incorrect processes, and there may be up to $2f$ correct processes whose arrays were not considered in this number. That is,
  \begin{align*}
    b(m,m') \leq M[m][m']+2f
    \quad\Longleftrightarrow\quad
    M[m][m'] \geq b(m,m')-2f
  \end{align*}
  Suppose $b(m,m') > b(m',m) +2f + \kappa$. The above implies
  \begin{align*}
    M[m][m'] &\geq b(m,m')-2f \\
             &> b(m',m)+2f+\kappa-2f \\
             &= b(m',m)+\kappa \\
             &\geq M[m'][m]-f+\kappa.
  \end{align*}
  Thus, whenever $M[m][m'] > M[m'][m]-f+\kappa$, we need to prevent that the protocol \op{of-delivers} $m'$ before $m$. We do this by adding an edge from $m$ to $m'$ to the graph; as shown later, this ensures that $m'$ is not \op{of-delivered} before $m$.
\end{proof}

In the discussion so far, we have assumed that the two messages $m$ and $m'$ were received by at least $n-f$ processes.  Observe that every process can only contribute with 1 to either $M[m][m']$ or to $M[m'][m]$, but not to both.  However, it may occur that only a few processes receive $m$ and $m'$ before the cut, which implies that $M[m][m']$ may be very small, for example.  But that count might actually grow later and take on values up to $n - f - M[m'][m]$.  For this reason, we extend the condition derived from Lemma~\ref{lem:M} in the algorithm as follows: if $n-f-M[m'][m] > M[m'][m]-f+\kappa$ (which implies that $M[m'][m]$ is small, i.e., $M[m'][m] < \frac{n-\kappa}{2}$), we also add add an edge between $m$ and $m'$.  In summary, then, the algorithm adds an edge from $m$ to $m'$ whenever
\[
  \max\big\{M[m][m'], n-f-M[m'][m]\big\} > M[m'][m] -f + \kappa.
\]

Creating the graph in this manner leads to a directed graph that represents constraints to be respected by a fair order. Notice that two messages may be connected by edges in both directions when the difference is small and $\kappa < f$, i.e., there may be a cycle $(m,m')$ and $(m',m)$. This means that the difference between the number of processes voting for one or the other order is too small to decide on a fair order.  Longer cycles may also exist.  All payload messages with circular dependencies among them will be \op{of-delivered} together as a set.  For deriving this information, the algorithm repeatedly detects all strongly connected components in the graph and collapses them to a vertex.  In other words, any two vertices $m$ and $m'$ are merged when there exists a path from $m$ to $m'$ and a path from $m'$ to $m$.  This technique also handles cases like those derived from the \emph{Condorcet paradox}.

Finally, with the help of the collapsed graph, all payload messages defined by the cut are \op{of-delivered} in a fair order: First, all vertices without any incoming edges are selected.
Secondly, these vertices are sorted in a deterministic way and the corresponding payloads are \op{of-delivered} one after the other. Then the processed vertices are removed from the graph and another iteration through the graph starts.
As soon as there are no vertices left, i.e., all payload messages are \op{of-delivered}, the protocol proceeds to the next round.

Note that cycles may also extend beyond the cut, as shown by Kelkar \etal~\cite{DBLP:journals/iacr/KelkarDLJK21}.  Therefore, the algorithm holds back payload messages and does not \op{of-deliver} them while they may still become part of a longer cycle. This is ensured by counting how many times a message appears in $\msgs$ up to the cut. In particular, let $C[m]$ count this number for a message $m$. We require that any message is only \op{of-delivered} when $C[m] \geq \frac{n+f-\kappa}{2}$, i.e., after $m$ appears in $\msgs$ often enough such that it cannot become part of a cycle later or already be in a cycle that will grow later, e.g., through payloads that arrive after the cut.

\begin{example}\label{ex:1}
  Let us consider a system of $n=4$ processes, of which three ($p_1$,
  $p_2$, and $p_3$) are correct and one ($p_4$) is faulty ($f=1$).  We fix
  the order-fairness parameter $\kappa = 0$, the notion is trivially
  satisfied for higher values of $\kappa$. Every correct process
  $\op{of-broadcasts}$ three messages $m_a$, $m_b$, and $m_c$, in an order that forms a Condorcet cycle. The Byzantine process $p_4$ does not $\op{of-broadcast}$.
  Suppose all messages have been $\op{bcch-delivered}$ in round~$r$ to all
  correct processes, as shown in Figure~\ref{fig:a}. Then the protocol
  obtains the cut~$c = \begin{bmatrix}3&2&1&0\end{bmatrix}$.

  \begin{figure}
    \begin{center}
      \includegraphics[width=0.8\linewidth]{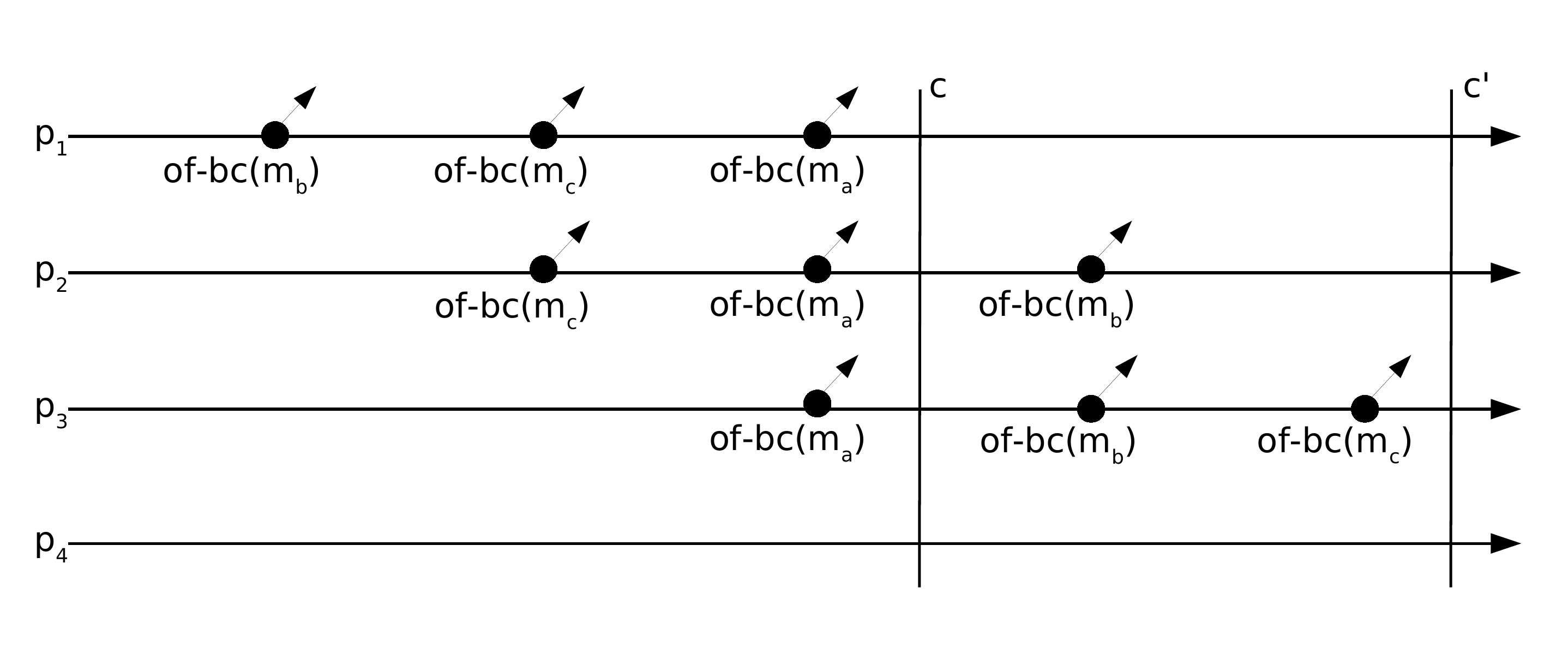}
      \vspace*{-6ex}
    \end{center}
    \caption{The execution of Example~\ref{ex:1}, in which three correct
      processes $p_1, p_2, p_3$ \op{of-broadcast} messages that form a
      cycle, which makes it impossible to sort them in a fair order.  After
      $m_a$, and after the protocol has computed cut~$c$, an unbounded
      number of additional payloads might follow (see text).}
    \label{fig:a}
  \end{figure}

From Algorithm~\ref{alg:alg1}-\ref{alg:alg2} (L\ref{ordermatrix}), the
matrix $M$ and the corresponding graph (L\ref{edges}) are

\hfill
\begin{minipage}[c]{0.45\linewidth}
  \begin{eqnarray*}
    M = \begin{bmatrix}
      0 & 0 & 0\\
      1 & 0 & 1 \\
      2 & 0 & 0 
    \end{bmatrix} 
  \end{eqnarray*}
\end{minipage}
\hfill
\begin{minipage}[c]{0.45\linewidth}
  \includegraphics[height=3.5cm]{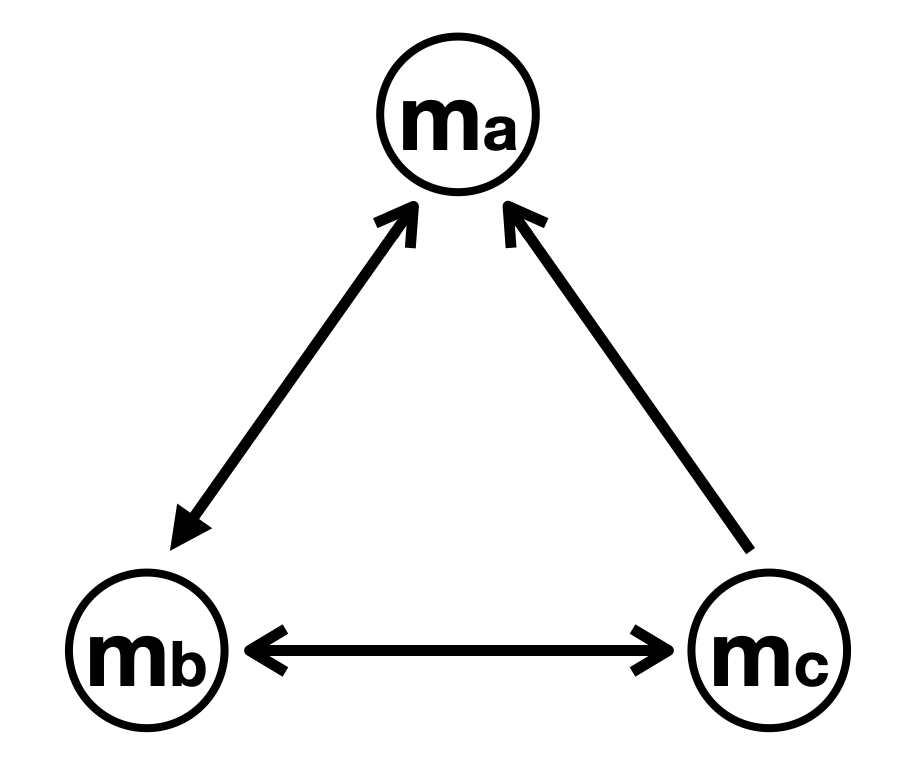}
\end{minipage}
\hfill

Notice that arbitrarily many payload messages that are \op{of-broadcast}
immediately after $m_a$ by $p_1$--$p_3$ might follow and arrive only in a
future round, after cut~$c$.  The protocol cannot know this yet and must
therefore postpone \op{of-delivery} of $m_a$.  As captured by the condition
that $C[m_b] = 1 < \frac{n+f-\kappa}{2}$, no payload message is
$\op{of-delivered}$ in this round.
The protocol continues with another round $r'$ obtaining a cut $c'$,
cf.~Figure~\ref{fig:a}.  Then the matrix $M$ and the graph become

\hfill
\begin{minipage}[c]{0.45\linewidth}
  \begin{eqnarray*}
    M = \begin{bmatrix}
      0 & 2 & 1\\
      1 & 0 & 2 \\
      2 & 1 & 0 
    \end{bmatrix} 
  \end{eqnarray*}
\end{minipage}
\hfill
\begin{minipage}[c]{0.45\linewidth}
  \includegraphics[height=3.5cm]{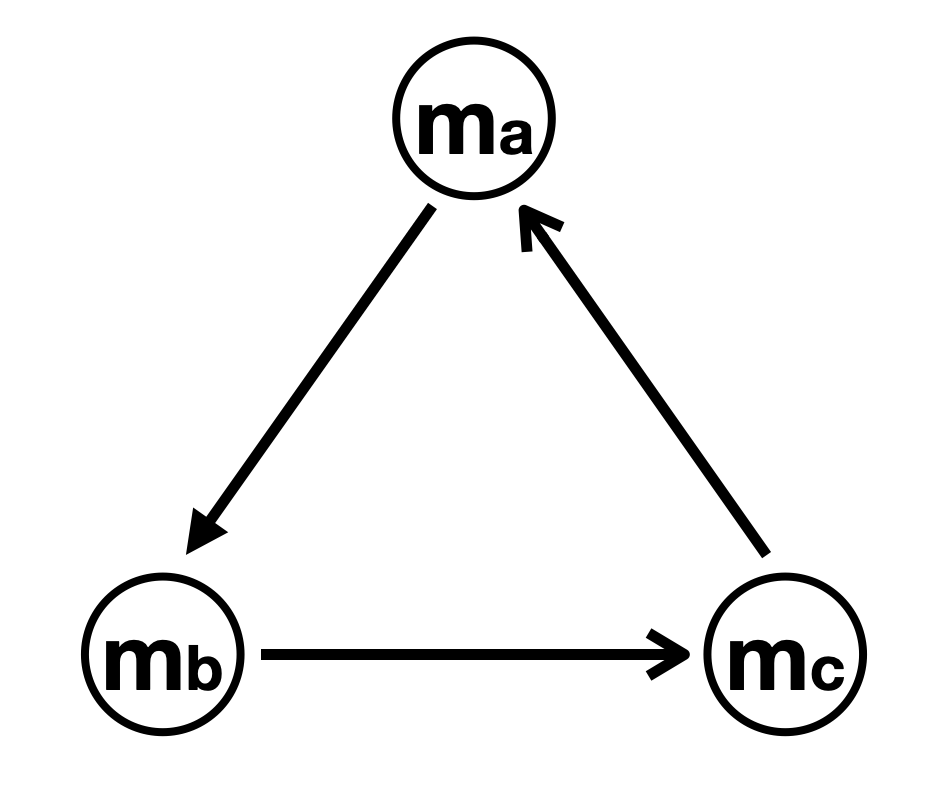}
\end{minipage}
\hfill

At this point, the protocol $\op{of-delivers}$ $\{m_a, m_b, m_c\}$
together, from a collapsed vertex, because now
$C[m_i]=3\geq \frac{n+f-\kappa}{2}$ for $i \in \{a,b,c\}$.
\end{example}

\subsection{Implementation}
\label{subsection:details}

Algorithm~\ref{alg:alg1}--\ref{alg:alg2} shows the \emph{quick order-fair atomic broadcast protocol} for a process~$p_i$.  The protocol proceeds in rounds, maintains a round counter~$r$ (L\ref{r}), and uses a boolean variable \inround, which indicates whether the consensus phase of a round is executing~(L\ref{inround}).

Every process maintains two hash maps: \msgs (L\ref{msgs}) and \vc (L\ref{vc}), in which process identifiers serve as keys. Hash map \msgs contains ordered lists of \op{bcch-delivered} payload from each process in the system. Variable \vc is a vector clock counting how many payload messages were \op{bcch-delivered} from each process.

\paragraph{Rounds.}

In each round, a matrix~$L$ (L\ref{matrix_L}) and a list~$\Sigma$ (L\ref{list_Sigma}) are constructed as inputs for consensus. The matrix~$L$ will consist of vector clocks from the processes and $\Sigma$ will contain the signatures of the processes.
Additionally, every process maintains a list of integers called \cut (L\ref{cutVar}) that are calculated in every round. This cut represents an index for every list in \msgs to determine the payload to be used for creating the fair order. Initially, all values are zero.
Finally, all \op{of-delivered} payload messages are included in a set \delivered (L\ref{del}), to prevent a repeated delivery in future rounds.

The protocol starts when a client submits a payload message~$m$ using an \op{of-broadcast}($m$) event. BCCH then broadcasts $m$ to all processes in the network (L\ref{bcch-broadcast}).  When $m$ with label~$l$ from process~\ppj is \op{\bcch-delivered} (L\ref{bcch-deliver}), the vector clock \vc for process~\ppj is incremented.  The attached label~$l$ is not used by the algorithm and only serves to define that all correct processes \op{\bcch-deliver} the same payload following Definition~\ref{def:bcch}.  Additionally, payload~$m$ is appended to the list $\msgs[j]$ using an operation \op{append}($m$) (L\ref{append}).

When the length of $p_j$'s list in \msgs exceeds the \cut value for~$p_j$, new payloads may have arrived that should be ordered (L\ref{start_round}).  This tells the protocol to initiate a new round.  A new round could also be triggered later, as described in the remarks at the end of this section.

The first step of round~$r$ is to set the flag \inround.  Secondly, the protocol digitally signs the vector clock~\vc and obtains a signature~$\sigma$.
The values~$r$, $\sigma$, and \vc are then sent in a \str{status} message to all processes (L\ref{round_inround}--L\ref{sendstatus}).
When process~\ppi receives a \str{status} message from \ppj, it validates the contained signature~$\sigma'$ using \op{verify}($j,\vc',\sigma'$) (L\ref{verify}). An additional security check is made by comparing the locally stored round number~$r$ with the round number~$r'$ from the message.
If both conditions hold, the vector clock~$\vc'$ is stored as row $j$ in matrix~$L$ (L\ref{L}) and $\sigma'$ is stored in list~$\Sigma$ at index~$j$ (L\ref{Sigma}).

\paragraph{Defining a cut.}

As soon as \ppi has received $n-f$ valid \str{status}-messages (L\ref{waitforsignatures}), it invokes consensus (VBC, L\ref{v-propose}) for the round through \op{vbc-propose} with proposal $(L, \Sigma)$.  The predicate of VBC checks that a proposal consists of a matrix $L$ and a vector~$\Sigma$ such that for at least $n-f$ values~$j$, the entry $\Sigma[j]$ is a valid signature on row~$j$ of $L$.
When the VBC protocol subsequently decides, it outputs a common matrix $L'$ of vector clocks and a list $\Sigma'$ of signatures (L\ref{decide}). The process then uses $L'$ to calculate the cut, where $\cut[j]$ is the largest value $s$ such that at least $f+1$ elements in column $j$ in $L'$ are bigger or equal than $s$ (L\ref{cut}). In other words, $\cut[j]$ represents how many payload messages from \ppj were \op{bcch-delivered} by enough processes. This value is used as index into $\msgs[j]$ to determine the payloads that will be considered for creating the order in this round.

The algorithm then makes sure that all processes will hold at least all those payloads in \msgs that are defined by \cut.  Each process detects missing payload messages from sender \ppj from any difference between $\vc[j]$ and $\cut[j]$ (L\ref{check_missing}); if there are any, the process broadcasts a \str{missing}-message to all others.
When another process receives such a request from \ppj and already has the requested payloads in \msgs, it extracts them into a variable~\emph{resend} (L\ref{getmsgs}). More precisely,
it extracts a proof from the BCCH primitive with which any other process can verify that the payload from this particular sender is genuine.  This is done by invoking $\op{\bcch-create-proof}(\resend)$ (L\ref{createproof}); the messages and the proof are then sent in a \str{resend}-message to the requesting process~\ppj (L\ref{resend_missing}).

When process~\ppi receives a \str{resend}-message with a missing payload from $p_k$, it first verifies the provided proof $s'$ from the message by invoking a $\op{\bcch-verify-proof}(s')$ function (L\ref{verify_resend}). If the proof is valid, \ppi 
extracts (L\ref{bcch-get-msgs}) the payload messages through $\op{\bcch-get-messages}(s')$, appends them to $\msgs[k]$, and increments $\vc[k]$ accordingly.  The process repeats this until \msgs contains all payloads included in the cut.

\begin{algo*}
  \vbox{
  \small
  \begin{numbertabbing}
    xxxx\=xxxx\=xxxx\=xxxx\=xxxx\=xxxx\=MMMMMMMMMMMMMMMMMMM\=\kill
    \textbf{State}\\
    \> \(r \gets 1\): current round \label{r} \\
    \> \(\inround \gets \false\) \label{inround}\\
    \> $\var{msgs} \gets \emptyhashmap: \text{HashMap}\bigl[\{1, ..., n\} \to \emptyhashmap\bigr]$: array of ordered lists of \op{\bcch-delivered} messages \label{msgs}\\
    \> $\vc \gets \emptyhashmap: \text{HashMap}\bigl[\{1, ..., n\} \to \BN\bigr]$: array of counters for \bcch-delivered messages \label{vc}\\
    \> \(L \gets [0]^{n\times n}\): matrix of logical timestamps, constructed from $n$ vector clocks \label{matrix_L}\\
    \> \(\Sigma \gets \emptyhashmap^n\): list of signatures from \str{status} messages \label{list_Sigma}\\
    \> \(\cut \gets [0]^n\): the cut decided for the round \label{cutVar}\\
    \> \(\delivered \gets \emptyset\): set of delivered messages \label{del}\\
    
    \\
    \textbf{Initialization}\\
    \> Byzantine FIFO consistent broadcast channel (\bcch) \newline \label{}\\
    \\
    \textbf{upon} $\op{of-broadcast}(m)$ \textbf{do} \label{}\\
    \> \(\op{\bcch-broadcast}(m)\) \label{bcch-broadcast}\\ 
    \\
    \textbf{upon} \(\op{\bcch-deliver}(\ppj, l, m)\) \textbf{do} \label{bcch-deliver}\\
    \> \(\vc[j] \gets \vc[j] +1 \) \label{incrementvc1}\\
    \> \(\msgs[j]\).\op{append}($m$) \label{append}\\
    \\
    \textbf{upon} exists $j$ \textbf{such that} $\op{len}(\msgs[j])>\cut[j]\land \neg \inround$ \textbf{do} \label{start_round}
    \` // perhaps waiting longer\\
    \> $\inround \gets \true$ \label{round_inround}\\
    \> \(\sigma \gets \op{sign}(i, \vc) \) \label{sign}\\
    \> send message \([\str{status}, r, \vc, \sigma] \) to all \(\ppj \in \CP\) \label{sendstatus}\\ 
    \\
    \textbf{upon} receiving message \([\str{status}, r', \vc', \sigma'] \) from \(\ppj\) \textbf{such that} $r' = r \land \op{verify}(j, \vc', \sigma')$ \textbf{do} \label{verify}\\
    \> \(L[j] \gets \vc'\) \label{L}\\
    \> \(\Sigma[j] \gets \sigma'\) \label{Sigma}\\
    \\
    \textbf{upon} \( \bigl \vert\{p_j \in \CP \mid \Sigma[j] \neq \perp \} \bigr \vert \geq n-f\) \textbf{do} \label{waitforsignatures}\\
    \> \(\op{vbc-propose}\bigl((L, \Sigma)\bigr)\) for validated Byzantine consensus in round \(r\) \label{v-propose}\\
    \> $\Sigma \gets \emptyhashmap^n$ \label{reset_sigma}\\
    \\
    \textbf{upon} \(\op{vbc-decide}\bigl((L', \Sigma')\bigr)\) in round $r$ \textbf{do} \` // calculate the cut \label{decide}\\
    \> \textbf{for} $j \in \{1,\dots,n\}$ \textbf{do} \` // for each row in \(L'\) \label{}\\
    \>\> // $\cut[j]$ is the largest $s$ such that at least $f+1$ elements in column $j$ in $L'$ are at least $s$ \label{}\\
    \>\> $cut[j] \gets \op{max} \bigl\{s \mid \{k \mid \bigl| \{L'[k][j] \geq s\}\bigr| > f\}\bigr\}$ \label{cut}\\[1ex]
    \>\textbf{for} $j \in \{1,\dots,n\}$ \textbf{do} \` // check for missing messages \label{}\\
    \>\>\textbf{if} $\vc[j] < cut[j]$ \textbf{then}  \` // some messages that are \op{\bcch-broadcast} by \ppj are missing\label{check_missing}\\
    \>\>\> send message \([\str{missing}, r, j, \vc[j]] \) to all $p_k \in \CP $ \label{send_missing}\\
    \\
    \textbf{upon} receiving message $[\str{missing}, r', k, \ind]$ from \ppj \textbf{such that} $r' = r$ \textbf{do} \label{receive_missing}\\
    \>\textbf{if} $\vc[k] \geq \cut[k]$ \textbf{then}\label{check-vc-cut}\\
    \>\> $\resend \gets \msgs[k][\ind \dots \cut[k]]$ \` // copy messages from $p_k$ \label{getmsgs} \\
    \>\> $s \gets \op{\bcch-create-proof}(\resend)$ \label{createproof}\\
    \>\> send message \([\str{resend}, r, k, s] \) to \(p_j\) \` // send missing messages to $p_j$ \label{resend_missing}\\
    \\
    \textbf{upon} receiving message $[\str{resend}, r', k', s']$ from \ppj \textbf{such that} $r' = r \land \op{len}(\msgs[k]) < \cut[k]$ \textbf{do} \label{receive_resend}\\
    \> \textbf{if} $\op{bcch-verify-proof}(s')$ \textbf{then} \label{verify_resend}\\
    \>\> \(\vc[k] \gets \vc[k] + \op{\bcch-get-length}(s') \) \label{bcch-get-len}\\
    \>\> \(\msgs[k].\op{append}(\op{\bcch-get-messages}(s')) \) \label{bcch-get-msgs}
  \end{numbertabbing}
  }
  \caption{Quick order-fair atomic broadcast (code for $p_i$).}
  \label{alg:alg1}
\end{algo*}

\begin{algo*}
  \vbox{
  \small
  \begin{numbertabbing}
  xxxx\=xxxx\=xxxx\=xxxx\=xxxx\=xxxx\=MMMMMMMMMMMMMMMMMMM\=\kill
  \textbf{upon} $\op{len}(\msgs[j]) \geq \cut[j]$ for all $j \in \{1, \dots, n\}$  \textbf{do} \label{}\\
  \> $V \gets \Bigl(\bigcup_{j \in \{1, \dots, n\}} \msgs\bigl[j\bigr]\bigl[1 \dots \cut[k]\bigr] \Bigr) \setminus \delivered$ \label{filter_delivered}\\
  \> $M \gets \emptyhashmap: \text{HashMap}\bigl[\CM \times \CM \to \BN\bigr]$
  \` // counts in how many \var{msgs} arrays $m$ occurs before $m'$ \label{constructM}\\
  \> $C \gets \emptyhashmap: \text{HashMap}\bigl[\mathcal{M} \to \BN\bigr]$
  \` // counts how many times $m$ appears in \var{msgs} arrays \label{C}\\
  \> \textbf{for} \(m, m' \in V\) \textbf{do} \label{}\\
  \>\> \(M[m][m'] \gets \Bigl\vert \bigl\{j \in \{1, \dots, n\} \,\big|\, m \text{\ appears before\ } m' \text{\ in\ } \msgs\bigl[j\bigr]\bigl[1 \dots \cut[k]\bigr] \bigr\} \Bigr\vert\) \label{ordermatrix}\\
  \>\> $C[m] \gets \Bigl| \bigl\{p_j \,\big|\, m \in \msgs\bigl[j\bigr]\bigl[1 \dots \cut[k]\bigr] \bigr\}\Bigr|$\label{constructC}\\[1ex]
  \> \(E \gets \Bigl\{(m, m') \,\Big|\, \max\bigl\{M[m][m'], n-f-M[m'][m]\bigr\} > M[m'][m] -f + \kappa \Bigr\} \) \` // add all edges \label{edges}\\
  \> \( H \gets (V, E) \)
  \` // $(V,E) = G$ \label{collapsed_graph} \\
  \> \textbf{while} $H$ contains some strongly connected subgraph $\overline{H} = (\overline{W}, \overline{F}) \subseteq H$ \textbf{do} \label{scc}\\
  \>\> \(H \gets H / \overline{F} \) \` // collapse vertices in $\overline{W}$ into a single vertex $\bar{w}$ via edge contraction \label{edgecontraction}\\[1ex]
  \> // $H = (W, F)$ \label{}\\
  \> \textbf{while} $\exists w \in  \op{sort}(W): \op{indegree}(w) = 0 \land \op{stable}(w)$ \textbf{do}
  \` // in deterministic order \label{startsorting}\\
  \>\> $\op{of-deliver}(\op{flatten}(w))$
  \` // $w$ may be a message or a (recursive) set of sets of messages \label{of-deliver}\\
  \>\> $\delivered \gets \delivered \cup \op{flatten}(w)$ \` // keep track of delivered messages \label{addtodelivered}\\
  \>\> \(W \gets W \setminus \{w\} \) \label{removefromgraph}\\[1ex]
  \> $L \gets [0]^{n\times n}$ \label{reset}\\
  \> $\inround \gets \false$ \label{}\\
  \> $r \gets r + 1$\` // move to the next round \label{round}\\
  \\
  \textbf{function} $\op{stable}(w)$
  \` // $w$ may be a message from \CM or a (recursive) set of sets of messages \label{funcstable}\\
  \> \textbf{return} $\bigl(w \in \CM \land C[w] \geq \frac{n+f-\kappa}{2} \bigr)
  \lor \bigwedge_{w' \in w : w' \not\in \CM} \op{stable}(w')$ \label{}\\
  \\
  \textbf{function} $\op{flatten}(w)$
  \` // $w$ may be a message from \CM or a (recursive) set of sets of messages \label{funcflatten}\\
  \> \textbf{return} $\{m \in w \mid m \in \CM\} \cup \bigcup_{w' \in w : w' \not\in \CM} \op{flatten}(w')$ \label{}\\
  \end{numbertabbing}
  }
  \caption{Quick order-fair atomic broadcast (code for $p_i$).}
  \label{alg:alg2}
\end{algo*}

\paragraph{Ordering messages.}

At this point, every process stores all payloads \msgs that have been \op{bcch-delivered} up to the cut.  The remaining operations of the round are deterministic and executed by all processes independently.
The next step is to construct the directed \emph{dependency graph}~$G$ that expresses the constraints on the fair order of the payload messages. Vertices ($V$) in $G$ represent payload messages that may be \op{of-delivered} and edges ($E$) in $G$ express constraints on the order among these payloads.
First, all messages within the cut that are not yet delivered are added as vertices to the set $V$ (L~\ref{filter_delivered}).

Then, for each pair of messages $m$ and $m'$ in $V$, the algorithm constructs $M$~(L\ref{ordermatrix}) such that $M[m][m']$ counts how many times a payload $m$ appears before payload $m'$ in the cut. In the same loop, the algorithm counts how many times message $m$ appears within the cut and stores this result in array~$C$~(L\ref{constructC}). Finally, all entries $M[m][m']$ and $M[m'][m]$ are compared and if condition $\op{max}\{M[m][m'], n-f-M[m'][m] \} > M[m'][m]-f+\kappa$ holds, then a directed edge from $m$ to $m'$ is added (L\ref{edges}).  This edge indicates that $m$ must \emph{not} be ordered \emph{after} $m'$, i.e., that $m$ is \op{of-delivered} before $m'$ or together with~$m'$.

Any payloads that cannot be ordered with respect to each other now correspond to strongly connected components of $G$. A strongly connected component is a subgraph, which for each pair of vertices $m$ and $m'$ contains a path from $m$ to $m'$ and one from $m'$ to $m$. In the next step, a graph $H=(W,F)$ is created and all strongly connected components in $H$ are repeatedly collapsed until $H$ contains no more cycles. This is done by contracting the edges in each connected component and merging all its vertices (L\ref{collapsed_graph}--L\ref{edgecontraction}).

The algorithm further considers all vertices $w$ without incoming edges and which satisfy condition $C[m] \geq \frac{n+f-\kappa}{2}$, checked in function $\op{stable}(w)$ (L~\ref{funcstable}). All such $w$ will be sorted in a deterministic way (L~\ref{startsorting}). Notice that $w$ may correspond to a message from \CM or a recursive set of sets of messages. Therefore function $\op{flatten}(w)$ (L~\ref{funcflatten}) is used to extract payload messages and \op{of-deliver} them (L~\ref{of-deliver}). All \op{of-delivered} payload messages are added to \delivered (L\ref{addtodelivered} to prevent a repeated processing. Finally, $w$ is removed from $H$ (L\ref{removefromgraph}), and a next pass of extracting vertices with no incoming edge follows.  This is repeated until all vertices have been processed and \op{of-delivered}.

The algorithm then initializes $L$, sets \inround to $\false$, increments the round number~$r$, and starts the next round (L\ref{reset}-L\ref{round}).

\paragraph{Remarks.}

The condition for starting a round in L\ref{start_round} only waits until
\emph{one} single payload exists in \msgs that was not considered
before. This is necessary for liveness but not very efficient.  This number
can be increased such that a new round starts only after $K = \Theta(n)$
new payload messages have arrived. Note that this threshold affects the
amortized message and bit complexities that are considered in
Section~\ref{subsection:complexity}.

Recall that our model assumes that every correct process \op{of-broadcasts}
all payload messages.  For simplicity, though, our validity property has
been formulated only for executions without faulty processes.  It could be
strengthened so that it holds for all executions, in which the processes do
not \op{of-broadcast} an unbounded number of them that form a Condorcet
cycle.

The protocol can also be changed to satisfy the even stronger liveness
property of Kelkar \etal~\cite{DBLP:journals/iacr/KelkarDLJK21}, which the
Themis protocol satisfies.  To deal with Condorcet cycles of unbounded
length, one would modify the interface of order-fair broadcast so that it
additionally outputs \op{of-startblock} and \op{of-endblock} events that
carry no parameters.  Furthermore, \op{of-deliver} would only output single
payload messages from~\CM.  An output ``block'' then consists of all
payloads that are \op{of-delivered} between a \op{of-startblock} event and
the subsequent \op{of-endblock} event.  However, long cycles occur very
infrequently in realistic scenarios, as shown by Kelkar
\etal~\cite{DBLP:journals/iacr/KelkarDLJK21}.

If consensus is not ``black box'' and treated in a modular way, more
efficient variations of this protocol become possible.  In particular, the
ordering rounds may be integrated with a leader-based Byzantine consensus
protocol~\cite{DBLP:books/daglib/0025983}.  This implies that multiple
leaders in successive consensus rounds (or ``epochs'') may be needed to
agree on the cut of one ordering round.  The Themis
protocol~\cite{DBLP:journals/iacr/KelkarDLJK21} adopts this pattern.

The protocol satisfies another natural property, which has not been made
explicit before in the literature, but is achieved by several existing
protocols~\cite{DBLP:conf/crypto/Kelkar0GJ20, DBLP:conf/osdi/ZhangSCZA20,
  DBLP:journals/iacr/KelkarDLJK21}, not only by quick order-fair broadcast.
Consider an execution in which the correct processes \op{of-broadcast}
messages $m_1, \dots, m_l$ such that $b(m_i, m_j) > 2f + \kappa$ for
$i = 1, \dots, l$ and $j = i+1, \dots, l$ and there are no further messages
\op{of-broadcast} that might include $m_1, \dots m_l$ in a cycle: Then
$m_i$ is actually \op{of-delivered} before $m_j$.  Note that differential
order fairness is a safety condition and would not prevent that
$m_1, \dots, m_l$ are \op{of-delivered} jointly in one set.

\subsection{Complexity}
\label{subsection:complexity}

In this section, we analyze the complexity of the \emph{quick order-fair atomic broadcast protocol}. We use two measures: message complexity and communication (bit) complexity. Moreover, we compare our results with existing algorithms from the literature.

\paragraph{Message complexity.}

If the Byzantine FIFO consistent broadcast channel (BCCH) is implemented using ``echo broadcast''~\cite{DBLP:conf/ccs/Reiter94}, it takes $O(n)$ protocol messages per payload message.  Since more than $f$ processes \op{of-broadcast} each payload message and $f$ is proportional to $n$, the overall message complexity of BCCH is~$O(n^2)$.  Under high load, batching could be used to reduce the number of messages incurred by BCCH.  In the protocol itself, every process sends $O(n)$ \str{status}, \str{missing}, and \str{resend} messages, which also amounts to $O(n^2)$ messages.

The cost of validated Byzantine consensus (VBC) depends on the assumptions used for implementing it.  In the asynchronous model, optimal protocols~\cite{DBLP:conf/podc/AbrahamMS19,DBLP:conf/podc/LuL0W20} achieve $O(n^2)$ messages on average.  Assuming that $K$ new payload messages are delivered in each round, this becomes $O(\frac{n^2}{K})$ per payload. Choosing $K = \Omega(n)$ reduces the amortized cost of consensus to $O(n)$ messages per payload message. Note that when using an implementation of VBC with complexity $O(n^3)$, as the algorithm of Cachin \etal~\cite{DBLP:conf/crypto/CachinKPS01}, we can choose $K$ proportional to $n$ and may again obtain expected amortized message complexity~$O(n^2)$.

With a partially synchronous consensus protocol according to Section~\ref{sec:vbc}, VBC uses $O(n)$ messages in the best case and $O(n^2)$ messages in the worst case.  The total amortized cost of quick order-fair atomic broadcast per payload, therefore, is also $O(n^2)$ messages in this implementation.

\paragraph{Communication (bit) complexity.}

If digital signatures are of length $\lambda$ and payload messages are at most $L$ bits, the bit complexity of BCCH for one sender is $O(nL + n^2 \lambda)$, and since we assume that $O(n)$ processes broadcast each message, this becomes $O(n^2 L + n^3 \lambda)$. Optimal asynchronous VBC protocols~\cite{DBLP:conf/podc/AbrahamMS19,DBLP:conf/podc/LuL0W20} have $O(nL + n^2 \lambda)$ expected communication cost, for their payload length~$L$.  Since the proposals for VBC are $n \times n$ matrices, the bit complexity of this phase  is $O(n^3 + n^2 \lambda)$. Assuming that $K$ is proportional to $n$, the amortized bit complexity of VBC per payload is $O(n^2 + n\lambda)$. From this, it follows that the amortized bit complexity of the algorithm per payload message is~$O(n^2L + n^3\lambda)$.

\paragraph{Discussion.}

Table~\ref{table:complexity} gives an overview of message complexities of algorithms with different notions for fair payload message ordering. We compare our \emph{quick order-fair atomic broadcast} with the algorithms introduced by Kelkar \etal~\cite{DBLP:conf/crypto/Kelkar0GJ20} and Zhang \etal~\cite{DBLP:conf/osdi/ZhangSCZA20}. We leave out from the overview the protocol by Kursawe~\cite{DBLP:conf/aft/Kursawe20} since it has a completely different approach for solving fair payload message ordering. 

The asynchronous Aequitas protocol~\cite[Sec.~7]{DBLP:conf/crypto/Kelkar0GJ20} provides fair order using a \emph{FIFO Broadcast primitive}, implemented by \emph{OARcast} of Ho \etal~\cite{DBLP:conf/opodis/HoDR07}.  The implementation of OARcast described there uses $n$ \emph{ARcasts}~\cite{DBLP:conf/opodis/HoDR07} for each payload, where one ARcast causes $\Theta(n^2)$ network messages.  Since Aequitas requires that every correct process broadcasts each payload, the total complexity increases by another factor of~$n$.  Thus, each payload message incurs a cost of $\Theta(n^4)$ messages in the gossip phase.  Moreover, one instance of \emph{set agreement} is executed for each payload, and each one of them calls $n$ binary consensus protocols.  Therefore Aequitas uses $\Omega(n^4)$ messages for delivering one payload, which exceeds the cost of quick order-fair broadcast at least by the factor~$n^2$.

Ordering linearizability~\cite{DBLP:conf/osdi/ZhangSCZA20} is defined using
a logical order of events observed on each process.  Its implementation in
the Pomp\=e protocol, however, appears to require synchronized clocks in
the sense of knowing bounds on differences between local clocks.
Hence, the complexity of Pomp\=e cannot be compared to that of asynchronous
protocols for order fairness.  Irrespective of this difference, its cost is
$O(n^2)$ messages and one instance of Byzantine consensus per payload
message.  The communication complexity of this protocol is $O(n^3 L)$ since
each process broadcasts a \str{sequence}-message to all others with
contents of length~$O(nL)$.

Themis~\cite{DBLP:journals/iacr/KelkarDLJK21} relies strongly on a leader
$p_\ell$ to construct a fair order.  If $p_\ell$ does not perform its task
timely, the protocol may switch to another leader, similarly to existing
leader-based protocols.  For assessing the complexity of Themis here, we
consider the optimistic case, but note that the complexities stated for
some other protocols, in particular for the quick order-fair broadcast, do
not depend on timely leaders.

Themis lets all processes send their local orderings to~$p_\ell$ first.
Suppose these consist of approximately $K = \Theta(n)$ payload messages
each.  Then $p_\ell$ constructs a graph $G$ on these and sends $G$ and some
justification information to all processes.  They maintain local graphs,
update them in response, and potentially output some payload messages.
This incurs a cost of $O(n)$ messages.  Since $G$ contains $K$ nodes and,
in general, $O(K^2)$ edges, the average communication complexity is
$O(n^2 + nL)$ in the best case.
  
\begin{table}[ht!]
  \centering
  \begin{tabular}{|l|l|c|c|} 
    \hline
    Notion & Algorithm & Avg.\ messages & Avg.\ communication \\ 
    \hline
    Block-Order-Fairness~\cite{DBLP:conf/crypto/Kelkar0GJ20} & Async.\ Aequitas~\cite{DBLP:conf/crypto/Kelkar0GJ20} & $O(n^4)$ & $O(n^4 L)$ \\ 
    Ordering Linearizability~\cite{DBLP:conf/osdi/ZhangSCZA20} & Pomp\=e$^*$~\cite{DBLP:conf/osdi/ZhangSCZA20} & $O(n^2)$ & $O(n^3 L)$ \\ 
    Block-Order-Fairness~\cite{DBLP:journals/iacr/KelkarDLJK21} & Themis~\cite{DBLP:journals/iacr/KelkarDLJK21} & $O(n)$ & $O(n^2 + nL)$ \\ 
    Differential Order Fairness & Quick o.-f. broadcast & $O(n^2)$& $O(n^2 L + n^3 \lambda)$\\
    \hline
  \end{tabular}
  \caption{Overview of different notions for fair message ordering and
    corresponding algorithms, with their expected message and communication
    complexities.  The summary assumes $L \geq \lambda$.
    ($^*\,$The Pomp\=e protocol requires synchronized clocks.)}
  \label{table:complexity}
\end{table}

\section{Analysis}

In this section, we show that the \emph{quick order-fair atomic broadcast protocol} in Algorithm~\ref{alg:alg1}--\ref{alg:alg2} implements $\kappa$-differentially order-fair atomic broadcast. The properties to be satisfied are~(Definition~\ref{def:kappa-of}): \emph{no duplication, agreement, total order, strong validity and $\kappa$-differential order fairness}.

\begin{lemma}\label{analysis:noduplication}
  No message is \op{of-delivered} more than once in
  Algorithm~\ref{alg:alg1}--\ref{alg:alg2}.
\end{lemma}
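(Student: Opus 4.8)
The plan is to argue that once a payload message $m$ is placed into the set \delivered, it can never again be \op{of-delivered} in any later round. First I would observe that within a single execution of the ordering block (L\ref{filter_delivered}--L\ref{round}), the vertex set $V$ is explicitly constructed as the union of all messages inside the current cut \emph{minus} \delivered (L\ref{filter_delivered}); hence no message of \delivered appears as a vertex, nor inside any collapsed super-vertex, and therefore nothing in \delivered can be \op{of-delivered} during this round. Since $H$ is a finite DAG after the strongly-connected-component contraction (L\ref{scc}--L\ref{edgecontraction}) and the inner \textbf{while} loop (L\ref{startsorting}--L\ref{removefromgraph}) removes each processed vertex $w$ from $W$, no vertex is processed twice within one round either; combined with the fact that \op{flatten} extracts each message at most once from the recursive set structure, every message is \op{of-delivered} at most once per round.

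Next I would handle repetition \emph{across} rounds. Immediately after \op{of-deliver}($\op{flatten}(w)$) in L\ref{of-deliver}, the algorithm executes $\delivered \gets \delivered \cup \op{flatten}(w)$ in L\ref{addtodelivered}, so every message \op{of-delivered} in round $r$ belongs to \delivered from that point onward. The variable \delivered is only ever enlarged (L\ref{del} initializes it to $\emptyset$, L\ref{addtodelivered} is the sole write and it is a union), so once $m \in \delivered$ this holds for the remainder of the execution. Therefore in any round $r' > r$, the filtering in L\ref{filter_delivered} removes $m$ from $V$ again, and by the single-round argument above $m$ cannot be \op{of-delivered} in round $r'$.

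Putting the two parts together yields the claim: a message is \op{of-delivered} in at most one round, and at most once within that round, hence at most once overall. The main subtlety — the only place where a careful reading is required — is the recursive structure of $w$: a vertex of $H$ may be a collapsed super-vertex standing for a set of sets of messages, and one must check that $\op{flatten}$ (L\ref{funcflatten}) yields each underlying payload exactly once and that distinct top-level vertices $w$ processed in the same round have disjoint flattenings. This disjointness follows because the super-vertices of $H$ partition the original vertex set $V$ (edge contraction of disjoint strongly connected components), and $V$ itself contains each message at most once. No other step of the argument presents a real obstacle.
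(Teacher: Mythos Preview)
Your proof is correct and follows essentially the same approach as the paper: both arguments rest on the filtering step in L\ref{filter_delivered}, where messages already in \delivered are excluded from~$V$, together with the monotone growth of \delivered via L\ref{addtodelivered}. Your version is considerably more thorough than the paper's short proof---in particular your treatment of within-round non-duplication via the partition of~$V$ into strongly connected components and the disjointness of the resulting flattenings---but the underlying idea is the same.
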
 
\begin{proof}
  The check in L\ref{filter_delivered} of the protocol implementation
  ensures that no payload message is \op{of-delivered} more than once. In
  the step when the protocol creates graph vertices, payload messages that
  are already contained in variable \delivered are filtered out. Those
  messages will not be included in the graph and cannot be
  \op{of-delivered} again. Note that even in the case when a payload
  message~$m$ is \op{bcch-delivered} multiple times, because of filtering
  in L\ref{filter_delivered}, it is not possible that $m$ is
  \op{of-delivered} more than once.
\end{proof}

\begin{lemma}\label{analysis:agreement}
  In Algorithm~\ref{alg:alg1}--\ref{alg:alg2}, if a message $m$ is
  \op{of-delivered} by some correct process, then $m$ is eventually
  \op{of-delivered} by every correct process.
\end{lemma}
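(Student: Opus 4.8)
The plan is to argue that once some correct process $p_i$ \op{of-delivers} a message $m$ in some round $r$, every correct process eventually reaches round $r$, computes the same cut, holds the same payloads up to the cut, and hence builds the same graph $H$ and \op{of-delivers} exactly the same set in round $r$. The backbone is an induction on the round number: I claim that all correct processes execute round $r$, that the \str{status}/VBC phase of round $r$ terminates at every correct process, and that the decided pair $(L',\Sigma')$ — and therefore the derived \cut — is identical at all correct processes. Agreement of VBC (Definition~\ref{def:vbc}) gives the last point directly; termination of the \str{status} phase follows because at least $n-f$ correct processes eventually \op{bcch-broadcast} their payloads (by the standing assumption that every correct process \op{of-broadcasts} every payload) and hence, by \textbf{Validity} of BCCH, every correct process eventually \op{bcch-delivers} enough payloads to trigger its own \str{status} message and to collect $n-f$ valid ones, so the guard in L\ref{waitforsignatures} is eventually satisfied.

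Next I would show that every correct process eventually obtains, in \msgs, all payloads up to the cut of round $r$. For each $j$, $\cut[j]$ was chosen (L\ref{cut}) so that at least $f+1$ entries of column $j$ in $L'$ are $\geq \cut[j]$; since at most $f$ of those come from Byzantine processes, at least one correct process had vector-clock entry $\geq \cut[j]$ for $p_j$, meaning that correct process had \op{bcch-delivered} $\cut[j]$ payloads from $p_j$. By \textbf{FIFO delivery} and \textbf{Consistency} of BCCH, those are a well-defined prefix of $\msgs[j]$ that is the same at all correct processes that hold it; and a correct process holding them can answer a \str{missing} request with a verifiable \str{resend} (L\ref{getmsgs}--L\ref{resend_missing}), which any correct process accepts after checking the proof (L\ref{verify_resend}). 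Hence every correct process eventually passes the guard in L\ref{filter_delivered} with identical $\msgs[j][1\dots\cut[j]]$ for all $j$ — modulo the \delivered set, which by the induction hypothesis (agreement in all earlier rounds) is also identical across correct processes at the start of round $r$.

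From that point the round is purely deterministic and identical across correct processes: $V$, $M$, $C$, the edge set $E$, the collapsed graph $H$, the \op{stable} predicate, and the deterministic \op{sort} all depend only on data already shown to be common. Therefore the set $\op{flatten}(w)$ that $p_i$ \op{of-delivered} in round $r$ is \op{of-delivered} by every correct process in its own round $r$, which is exactly the claim. The main obstacle is the middle step — making precise that the cut always points into a prefix that some correct process genuinely holds, that this prefix is consistent across correct processes, and that the \str{missing}/\str{resend} subprotocol therefore terminates at every correct process (in particular, that a correct requester always finds a correct responder who already has the needed payloads); the BCCH \textbf{Validity}, \textbf{FIFO delivery}, and \textbf{Consistency} properties, together with the $f+1$ threshold in the cut definition, are what close this gap, but the argument has to be spelled out carefully rather than waved through.
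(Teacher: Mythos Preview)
Your proposal is correct and follows essentially the same approach as the paper: VBC agreement yields a common $L'$ and hence a common \cut, the $f{+}1$ threshold in the cut definition guarantees a correct responder for every \str{missing} request, and the remainder of the round is deterministic on common inputs. Your version is in fact more thorough than the paper's own proof, making explicit the induction on rounds (needed so that \delivered is common at the start of round~$r$), the BCCH \emph{Consistency} and \emph{FIFO delivery} properties, and the termination of the \str{status} phase --- all points the paper treats only implicitly.
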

\begin{proof}
  Suppose that a payload message $m$ is \op{of-delivered} by some correct
  process~\ppi in round $r$. Following the protocol steps, in round $r$ all
  correct processes decide on the same $L'$ (L\ref{decide}). This is
  guaranteed by the \emph{agreement} property of the validated Byzantine
  consensus because no two correct processes decide differently. Since the
  matrix $L'$ is used to construct the cut deterministically, all correct
  processes construct the same \cut~(L\ref{cut}).
  
  We can then distinguish two cases: In the first case, all correct
  processes have already \op{bcch-delivered} $m$ and store it in
  $\msgs$. In the second case, there are some correct processes that have
  never heard of $m$ simply because of some delays in the network.  Then
  these correct processes send a \str{missing}-message to all processes,
  requesting the delivery of their missing payloads.  Since every message
  included in the cut was announced by $f+1$ processes, and therefore also
  by at least one correct process, some process will respond with a
  \str{resend}-message containing~$m$.  Once all these messages are
  delivered, all correct processes store $m$ in~\msgs.
  
  In the next step, every correct process builds graph~$G$. Each vertex in
  the graph is constructed deterministically from the same information by
  every process, concretely, from the payload messages in \msgs and
  excluding those that are already in the \delivered set
  (L\ref{filter_delivered}). Following the protocol, every correct process
  will eventually construct the same $G$ and output the same sequence of
  payload messages, also including~$m$.
\end{proof}

\begin{lemma}\label{analysis:totalorder}
  Let $m$ and $m'$ be two messages such that \ppi and \ppj are correct
  processes that \op{of-deliver} $m$ and $m'$. In
  Algorithm~\ref{alg:alg1}--\ref{alg:alg2}, if \ppi \op{of-delivers} $m$
  before $m'$, then \ppj also \op{of-delivers} $m$ before $m'$.
\end{lemma}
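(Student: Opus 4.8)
The plan is to show that all correct processes produce exactly the same sequence of \op{of-deliver} outputs, from which total order follows immediately. The argument proceeds round by round. First I would invoke the \emph{agreement} property of validated Byzantine consensus: in each round $r$, all correct processes that reach the \op{vbc-decide} step obtain the same pair $(L', \Sigma')$ (L\ref{decide}). Since \cut is computed from $L'$ by a deterministic rule (L\ref{cut}), every correct process in round $r$ computes the identical vector \cut. Next, using the argument already established in the proof of Lemma~\ref{analysis:agreement} — namely that the \str{missing}/\str{resend} mechanism guarantees every correct process eventually stores in \msgs all payloads up to the cut, because each such payload was reported by at least $f+1$ processes hence by some correct one — I would conclude that when a correct process passes the guard $\op{len}(\msgs[j]) \geq \cut[j]$ for all $j$, the relevant prefixes $\msgs[j][1\dots\cut[j]]$ have identical \emph{contents} across correct processes (though possibly not identical internal ordering within a single list; see below).

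The core of the proof is then to argue that the entire ordering computation on Algorithm~\ref{alg:alg2} is a deterministic function of data on which all correct processes already agree. Concretely: the vertex set $V$ (L\ref{filter_delivered}) depends only on the cut-prefixes of \msgs and on \delivered, which by induction on $r$ is the same set at all correct processes at the start of round $r$; the matrix $M$ and the array $C$ (L\ref{ordermatrix}--\ref{constructC}) are determined by how often each payload precedes another within the cut-prefixes; the edge set $E$ (L\ref{edges}) is a fixed threshold function of $M$; the strongly-connected-component collapse (L\ref{scc}--\ref{edgecontraction}) is deterministic; and the final extraction loop (L\ref{startsorting}--\ref{removefromgraph}) removes indegree-zero vertices that are \op{stable} in a fixed deterministic sort order. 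Hence every correct process, in round $r$, \op{of-delivers} the same finite sequence of payload sets; concatenating over rounds, all correct processes share one global \op{of-deliver} sequence. In particular, if \ppi \op{of-delivers} $m$ before $m'$ and \ppj \op{of-delivers} both, they must do so in the same relative order.

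The main obstacle is the subtle point that the list $\msgs[j]$ is built incrementally from \op{\bcch-deliver} events (L\ref{append}) and from \str{resend} messages (L\ref{bcch-get-msgs}), so a priori two correct processes might hold the payloads from $p_j$ in different internal orders, which would make $M[m][m']$ (which counts "$m$ appears before $m'$ in $\msgs[j][1\dots\cut[j]]$") process-dependent. I would resolve this by appealing to the \emph{FIFO delivery} property of BCCH (Definition~\ref{def:bcch}) together with the fact that the \str{resend} mechanism transfers a contiguous verifiable block $\msgs[k][\ind\dots\cut[k]]$ (L\ref{getmsgs}), so that every correct process ends up with the \emph{same ordered prefix} $\msgs[j][1\dots\cut[j]]$ — the BCCH-delivery order of payloads from a fixed sender $p_j$ is unique across correct processes. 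Once this ordering-consistency of the cut-prefixes is nailed down, the rest of the argument is the routine "same deterministic function of the same inputs" reasoning sketched above, combined with the induction on round number to propagate agreement on \delivered.
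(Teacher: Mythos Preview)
Your proposal is correct and follows essentially the same approach as the paper's proof, which likewise reduces to the agreement argument from Lemma~\ref{analysis:agreement} and then asserts that the graph construction and delivery steps are deterministic. Your version is in fact more careful: the paper simply states that ``the rest of the protocol \dots\ is deterministic,'' whereas you explicitly identify and resolve the subtlety that the ordered prefixes $\msgs[j][1\dots\cut[j]]$ must coincide across correct processes (via FIFO delivery of BCCH and the contiguous-block structure of the \str{resend} mechanism), a point the paper leaves implicit.
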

\begin{proof}
  Consider two distinct payload messages $m$ and $m'$ and let \ppi and \ppj
  be any two correct processes that \op{of-deliver} both messages. Assume
  that \ppi \op{of-delivers} $m$ before $m'$. If \ppi \op{of-delivers} $m$
  and $m'$ in round $r$, then both messages were included in the cut
  for~\ppi.  Due to the argument used to establish the \emph{agreement}
  property in Lemma~\ref{analysis:agreement}, it must be that $m$ and $m'$
  were also included in the cut for process~\ppj in round~$r$. The rest of
  the protocol, i.e., building a graph and \op{of-delivering} messages is
  deterministic. Therefore, \ppj delivers these two messages in round~$r$
  and also \op{of-delivers} $m$ before $m'$.  Extending this argument over
  all rounds of the protocol, it follows that every correct process
  \op{of-delivers} the same sequence of payload messages.
\end{proof}

\begin{lemma}\label{analysis:validity}
  If all processes are correct and $\op{of-broadcast}$ a finite number of messages in
  Algorithm~\ref{alg:alg1}--\ref{alg:alg2}, then every correct process
  eventually \op{of-delivers} these messages.
\end{lemma}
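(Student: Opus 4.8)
The plan is to argue that, in a failure-free execution with finitely many payloads, every payload that is \op{of-broadcast} is eventually \op{of-delivered}. First I would establish that every \op{of-broadcast} payload is eventually \op{bcch-delivered} at every correct process: this is immediate from the \emph{validity} property of BCCH (Definition~\ref{def:bcch}), so after some finite time every process has all payloads recorded in \msgs. Next I would show that rounds keep being triggered as long as there is an undelivered payload: whenever some \msgs list grows past its \cut entry and no round is in progress, the guard in L\ref{start_round} fires, so a new round starts; within that round, since all $n$ processes are correct, all of them send \str{status} messages, the $n-f$-threshold in L\ref{waitforsignatures} is met, VBC is invoked and — by its \emph{termination} property — decides some $(L',\Sigma')$, so the round completes. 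I would also note that with all processes correct there are never any missing payloads to fetch once BCCH has delivered everything, so the \str{missing}/\str{resend} exchange does not block progress, and the condition in L\ref{filter_delivered} is eventually satisfied.

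The heart of the argument is to show that each not-yet-delivered payload $m$ eventually becomes a \emph{stable} vertex with indegree $0$ in the collapsed graph $H$, and hence is \op{of-delivered}. For stability I would argue that once all finitely many payloads have been \op{bcch-delivered} everywhere, in some later round the cut covers all of them, so $C[m] = n = n - f \ge \frac{n+f-\kappa}{2}$ (using $f=0$, $\kappa \ge 0$); thus \op{stable}($m$) holds, and collapsing strongly connected components only helps, since a collapsed vertex is stable once all its constituents are. For the indegree-$0$ part I would invoke the standard fact that a finite DAG — which $H$ is after the SCC-collapsing loop in L\ref{scc}--L\ref{edgecontraction} — always has a source, and that the inner loop L\ref{startsorting}--L\ref{removefromgraph} repeatedly removes stable sources; since every remaining vertex is stable in this round, the loop does not get stuck and drains all of $W$, so $m$ is \op{of-delivered} in this round. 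Finally I would note that once \op{of-delivered}, $m$ is added to \delivered (L\ref{addtodelivered}) and never reconsidered, and by \emph{no duplication} (Lemma~\ref{analysis:noduplication}) it is delivered exactly once; by \emph{agreement} (Lemma~\ref{analysis:agreement}) it is delivered at every correct process.

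The main obstacle I anticipate is ruling out the pathological scenario where $m$ is perpetually held back because it keeps landing in an ever-growing strongly connected component — the unbounded-Condorcet-cycle case the remarks explicitly exclude. Here the finiteness hypothesis is doing the real work: since only finitely many payloads are ever \op{of-broadcast}, there is a round after which the cut is "complete," no new vertices ever enter the graph, every vertex has $C[\cdot] = n$, and the graph $H$ stabilizes; from that round on the delivery loop cannot stall. I would make this precise by fixing the time $t_0$ after which all payloads are \op{bcch-delivered} at all correct processes, taking the first round $r_0$ that begins after $t_0$, and showing that in round $r_0$ (or, if some were delivered earlier, in $r_0$) all remaining payloads are delivered. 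A minor secondary point to check is that the round-start guard L\ref{start_round} does fire for the final batch even if fewer than the optimization's $K$ messages remain — which it does, since the stated guard triggers on a single new payload.
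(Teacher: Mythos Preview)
Your proposal follows essentially the same route as the paper's proof---BCCH validity to get every payload into \msgs, round progress via the \str{status}/VBC machinery, and finiteness of the resulting graph to guarantee eventual \op{of-delivery}---and is in fact more detailed than the paper's argument, which does not explicitly verify the stability threshold or the DAG-source loop.

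One small slip to fix: you write ``$C[m] = n = n - f \ge \frac{n+f-\kappa}{2}$ (using $f=0$)'', but $f$ here is the \emph{algorithm's resilience parameter}, not the number of actually faulty processes in the execution. The hypothesis ``all processes are correct'' does not set $f=0$; the threshold $\frac{n+f-\kappa}{2}$ in L\ref{funcstable} is computed with the fixed model parameter~$f$. Fortunately the inequality you need still holds: once the cut covers all payloads from all $n$ senders you get $C[m] = n$, and $n > 3f$ gives $2n > n + f \ge n + f - \kappa$, i.e., $n \ge \frac{n+f-\kappa}{2}$. So the conclusion stands, only the justification needs adjusting.
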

\begin{proof}
  Let \ppi be some correct process that \op{of-broadcasts} a payload
  message $m$. Due to the \emph{validity} property of the underlying
  Byzantine FIFO consistent broadcast channel, every correct process
  eventually \op{\bcch-delivers} $m$. According to the algorithm, in every
  round $r$ a process \ppi waits for $n-f$ processes to receive signed
  vector clocks to proposes a matrix of logical timestamps $L$ for
  validated Byzantine consensus (L\ref{v-propose}). The \emph{termination}
  property of validated Byzantine consensus guarantees that every correct
  process eventually decides some value and according to the
  \emph{agreement} property, no two correct processes decide
  differently. The resulting common $L'$ allows then each process to
  determine if $m$ is considered in the current round $r$. A message $m$ is
  considered if at least $f+1$ processes have \op{\bcch-delivered} $m$ and
  reported it in their vector clock (L\ref{cut}). Additionally, if $m$ is
  considered in the current round but some process~\ppi has not
  \op{\bcch-delivered} $m$ yet, \ppi will request that other processes send
  it the missing payload message (L\ref{send_missing}). Further, all
  messages in \msgs are added as vertices to the graph~$G$
  (L\ref{filter_delivered}). 
  Moreover, because every process \op{of-broadcasts} a finite number of messages, every possible graph that is created will be finite.
  Since $m$ was \op{of-broadcast} by a correct process
  \ppi, all processes are correct and $\op{of-broadcast}$ a finite number of messages, $m$ will eventually be \op{of-delivered}.

\end{proof}

\begin{lemma}\label{analysis:kappa}
  In Algorithm~\ref{alg:alg1}--\ref{alg:alg2}, if
  $b(m,m') > b(m',m) + 2f + \kappa$, then no correct process
  \op{of-delivers} $m'$ before $m$.
\end{lemma}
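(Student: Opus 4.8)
The plan is to combine three facts. First, by Lemma~\ref{lem:M} the hypothesis gives $M[m][m']>M[m'][m]-f+\kappa$ in every round, hence $\max\{M[m][m'],\,n-f-M[m'][m]\}>M[m'][m]-f+\kappa$, which is exactly the edge condition of L\ref{edges}; so whenever both $m$ and $m'$ are vertices of the dependency graph $G$ in some round, the edge $(m,m')$ is present. Second, since every correct process eventually $\op{of-broadcasts}$ both payloads, $b(m,m')+b(m',m)=n-f$, so the hypothesis is equivalent to $b(m,m')>\tfrac{n+f+\kappa}{2}$ and to $b(m',m)<\tfrac{n-3f-\kappa}{2}$. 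Third, the $\op{stable}$ predicate together with the topological delivery loop of Algorithm~\ref{alg:alg2} turns an edge $(m,m')$ into the constraint that $m$ is $\op{of-delivered}$ no later than $m'$.

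Let $r^\ast$ be the unique round (by Lemma~\ref{analysis:noduplication}) in which $m'$ is $\op{of-delivered}$ — the same round at every correct process by Lemmas~\ref{analysis:agreement} and~\ref{analysis:totalorder}. If $m$ is $\op{of-delivered}$ in a round before $r^\ast$ we are done, so assume $m\notin\delivered$ at the start of round~$r^\ast$; the key step is to show $m\in V$ in that round. Since $m'$ is delivered in round~$r^\ast$, it lies in a collapsed vertex $w'$ with $\op{indegree}(w')=0$ and $\op{stable}(w')=\true$, and unfolding $\op{stable}$ (L\ref{funcstable}) gives $C[m']\geq\tfrac{n+f-\kappa}{2}$. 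Hence at least $\tfrac{n+f-\kappa}{2}$ processes, and so at least $\tfrac{n-f-\kappa}{2}$ \emph{correct} processes, have $m'$ within their part of the cut; removing the (at most $b(m',m)$, hence fewer than $\tfrac{n-3f-\kappa}{2}$) correct processes that $\op{of-broadcast}$ $m'$ before $m$ leaves more than $\tfrac{n-f-\kappa}{2}-\tfrac{n-3f-\kappa}{2}=f$ correct processes, in particular at least one, say $p_j$, that has $m'$ inside its cut index $\cut[j]$ and that $\op{of-broadcast}$, hence $\op{bcch-broadcast}$, $m$ before $m'$. By FIFO delivery of the BCCH (Definition~\ref{def:bcch}), $m$ precedes $m'$ in $p_j$'s delivery sequence, so $m$ sits at a position $\leq\cut[j]$ as well; since every correct process waits for $\op{len}(\msgs[j])\geq\cut[j]$ for all $j$ before building $V$, and BCCH consistency makes these prefixes canonical, we get $m\in\msgs[j][1\dots\cut[j]]$, and therefore $m\in V$ in round~$r^\ast$ (as $m\notin\delivered$).

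With $m,m'\in V$ in round~$r^\ast$, the edge $(m,m')$ is in $G$ by the first fact. In the collapsed graph $H$, either $m$ and $m'$ lie in the same contracted vertex, in which case they are $\op{of-delivered}$ together in one $\op{flatten}$ call, or they lie in distinct vertices $w\ni m$ and $w'\ni m'$ joined by an edge $w\to w'$; then $w'$ retains positive indegree until $w$ is removed from $W$ (L\ref{removefromgraph}), and $w$ is removed only after being $\op{of-delivered}$, so $m$ is $\op{of-delivered}$ strictly before $m'$. In either case no correct process $\op{of-delivers}$ $m'$ before $m$, which is the claim.

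The main obstacle is the counting step in the second paragraph: turning $C[m']\geq\tfrac{n+f-\kappa}{2}$ into ``$m$ is already in the cut in round~$r^\ast$'' forces one to combine the fact that each cut entry is backed by at least $f+1$ reporters, the FIFO guarantee of the BCCH, and the bound $b(m',m)<\tfrac{n-3f-\kappa}{2}$, and it is precisely here that the numerical value of the $\op{stable}$ threshold $\tfrac{n+f-\kappa}{2}$ is pinned down. One also has to read $\op{stable}$ as requiring $C[\cdot]\geq\tfrac{n+f-\kappa}{2}$ for \emph{every} constituent payload of a contracted vertex, not merely for its nested sub-sets.
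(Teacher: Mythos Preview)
Your argument is correct and in fact cleaner than the paper's on the crucial point.  The paper handles the case ``both $m,m'\in V$'' the same way you do (Lemma~\ref{lem:M} plus the edge rule), but for the remaining case it argues the \emph{contrapositive direction}: it takes a generic stable message that is being \op{of-delivered} and a generic message not yet in the cut, and shows that the edge condition \emph{toward} the delivered message must fail, so by the contrapositive of Lemma~\ref{lem:M} the order-fairness hypothesis cannot hold between them.  Unwinding the reuse of the symbols $m,m'$ in that passage, this amounts to: if $m'$ were delivered while $m$ is absent from the cut, stability of $m'$ forces $M[m][m']\le M[m'][m]-f+\kappa$, contradicting Lemma~\ref{lem:M}.

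You take the direct route instead: from $C[m']\ge\tfrac{n+f-\kappa}{2}$ you peel off the at most $f$ Byzantine senders and the at most $b(m',m)<\tfrac{n-3f-\kappa}{2}$ correct senders that put $m'$ first, leaving strictly more than $f$ correct $p_j$ whose \msgs-column already contains $m'$ within $\cut[j]$ and who \op{bcch-broadcast} $m$ earlier; FIFO delivery then forces $m\in\msgs[j][1\dots\cut[j]]$, hence $m\in V$.  This is the same numerical balance the paper exploits, but your decomposition makes explicit why the threshold $\tfrac{n+f-\kappa}{2}$ is the right one and isolates exactly where the FIFO property of BCCH is used, which the paper's proof leaves implicit.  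Your final paragraph also correctly notes that if $m$'s collapsed vertex $w$ were not stable, the incoming edge $w\to w'$ would block delivery of $m'$ entirely, so the case ``$m\in V$ but $w$ unstable'' needs no separate treatment.  Your closing remark about reading $\op{stable}$ as a condition on \emph{every} payload in $\op{flatten}(w)$ matches the intended semantics and is worth keeping.
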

\begin{proof}  
  Recall that $b(m,m')$ is the number of correct processes that receive and
  \op{of-broadcast} $m$ before $m'$. Consider any two payload messages $m$
  and $m'$ such that $b(m,m') > b(m',m) + 2f + \kappa$.

  Suppose $m$ and $m'$ are both included in the cut of some round and none
  of them has been \op{of-delivered} yet.  The protocol defines a threshold
  based on $M$ for creating an edge between two vertices.
  Lemma~\ref{lem:M} shows that the condition for differential order
  fairness ensures that $M[m][m'] > M[m'][m] - f + \kappa$ in the protocol,
  where $M[m][m']$ counts how many times $m$ appears before $m'$ in \msgs.
  Moreover, as explained in connection with Lemma~\ref{lem:M}, the
  algorithm extends this condition for adding an edge $(m,m')$ to
  \begin{equation}\label{eq:max-cut}
    \max\big\{M[m][m'], n-f-M[m'][m]\big\} > M[m'][m] -f + \kappa,
  \end{equation}
  in order to cope with particularly small values of $M[m'][m]$.  This may
  be the case when the full relative ordering information about $m$ and
  $m'$, in the sense that $M[m][m'] + M[m'][m] \geq n-f$, is not yet
  available with the cut.  The implementation then adds an edge from $m$ to
  $m'$ to the graph~(L\ref{edges}).  This implies that $m'$ will not be
  \op{of-delivered} before $m$ because the algorithm respects this order by
  traversing the graph starting with vertices that have no incoming edges.
  Therefore, $m$ is either \op{of-delivered} before $m'$ or both messages
  are delivered together, within the same set.  Moreover, observe that
  (\ref{eq:max-cut}) ensures that graph generated by the protocol is
  connected.

  Consider now the case that $m'$ is not included at all in the cut of the
  current round~$r$.  We want to show that for all $m \in V$ of the
  graph~$G$, if $m$ is $\op{of-delivered}$ in round $r$, there cannot be
  such an~$m'$, for which an edge $(m', m)$ would be added at a later round
  and which might therefore violate $\kappa$-order fairness.  Recall that
  an edge $(m', m)$ is added to $G$ in a round whenever (\ref{eq:max-cut})
  holds.
  
  To be more precise, we show that the condition in L\ref{startsorting} and
  the properties of $\op{stable}()$ ensure $\kappa$-differential order
  fairness for such $m$ and $m'$.
  Let $\bar{w}$ be a node of the graph as in L\ref{startsorting}.  If every
  $m$ in $\op{flatten}(\bar{w})$ appears at least $\frac{n+f-\kappa}{2}$
  times in $\var{msgs}$ up to the cut, i.e., satisfies $\op{stable}(m)$, it
  means that no $m'$ (not in the cut) can be ordered before the messages
  in~$G$ of subsequent rounds.  In fact, let $m \in \op{flatten}(\bar{w})$
  be such that $\op{stable}(m) = \true$ and $m'$ not be in the cut.
  Since $\op{stable}(m)$ holds, $C[m] \geq \frac{n+f-\kappa}{2}$ also means
  that $M[m][m'] \geq \frac{n+f-\kappa}{2}$.  Thus,
  \[
    M[m'][m] \leq n - M[m][m']
    \leq n - \frac{n+f-\kappa}{2}
    = \frac{n-f+\kappa}{2}
  \]
  in any future round as well.  But this implies
  $M[m'][m] - M[m][m'] \leq -f + \kappa$, and thus, no edge $(m', m)$ is
  added according to (\ref{eq:max-cut}).  The argument given earlier then
  shows that order fairness is maintained.
  Notice that this takes care of scenarios as in Example~\ref{ex:1} that
  include some message $\bar{m} \in \op{flatten}(\bar{w})$ with
  $\neg\op{stable}(\bar{m})$.  There may exist a further message $m'$ not
  included in the cut such that $m'$ must be ordered not after $\bar{m}$.
\end{proof}

Lemmas~\ref{analysis:noduplication}--\ref{analysis:kappa} directly imply the following theorem, which concludes the analysis of the protocol.

\begin{theorem}\label{thm:algimpl}
  Algorithm~\ref{alg:alg1}--\ref{alg:alg2} implements $\kappa$-differentially order-fair atomic broadcast.
\end{theorem}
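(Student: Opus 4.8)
The plan is to verify directly that Algorithm~\ref{alg:alg1}--\ref{alg:alg2} satisfies each of the five properties required by Definition~\ref{def:kappa-of}, and for this it suffices to match each property with the lemma that establishes it: \emph{no duplication} is Lemma~\ref{analysis:noduplication}, \emph{agreement} is Lemma~\ref{analysis:agreement}, \emph{total order} is Lemma~\ref{analysis:totalorder}, \emph{weak validity} is Lemma~\ref{analysis:validity}, and \emph{$\kappa$-differential order fairness} is Lemma~\ref{analysis:kappa}. Since these five lemmas together cover exactly the conjunction of properties in the definition, the theorem follows. The body of the proof is therefore a one-line assembly step; the rest of this proposal records the points I would want to make sure are covered by the cited lemmas.

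The one structural prerequisite worth spelling out is that the per-round computation in Algorithm~\ref{alg:alg2} terminates, so that the inductive ``extend the argument over all rounds'' steps inside Lemmas~\ref{analysis:agreement} and~\ref{analysis:totalorder} are well founded. This amounts to three observations: the vertex set $V$ in any round is finite, since only finitely many payloads are \op{bcch-delivered} before a given cut; the loop that repeatedly collapses strongly connected components (L\ref{scc}--L\ref{edgecontraction}) halts because each edge contraction strictly decreases the number of vertices of a finite graph; and the loop that extracts source vertices and \op{of-delivers} them (L\ref{startsorting}--L\ref{removefromgraph}) halts for the same reason. Progress \emph{between} rounds is where liveness really lives, and that is handled by Lemma~\ref{analysis:validity} via the \emph{termination} property of VBC together with the \str{missing}/\str{resend} mechanism, which guarantees that every correct process eventually stores all payloads up to the cut.

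I expect essentially all the substance of the theorem to sit in Lemma~\ref{analysis:kappa}: it combines Lemma~\ref{lem:M} --- which translates the assumption $b(m,m') > b(m',m) + 2f + \kappa$ into the edge condition $M[m][m'] > M[m'][m] - f + \kappa$ while budgeting for up to $f$ reports from Byzantine processes and up to $2f$ correct processes omitted from the cut --- with the $\op{stable}$ predicate, which holds a message $m$ back until $C[m] \geq \frac{n+f-\kappa}{2}$ and thereby rules out a later edge $(m',m)$ being forced by a payload $m'$ that only enters \msgs after the cut. The remaining four properties are comparatively routine: \emph{no duplication} comes from the \delivered filter in L\ref{filter_delivered}; \emph{agreement} and \emph{total order} come from \emph{agreement} of VBC (so all correct processes compute the same cut) plus the fact that the graph construction, collapsing, and sorting are deterministic functions of the cut; and \emph{weak validity} follows from BCCH \emph{validity}, VBC \emph{termination}, and the finiteness of every graph that arises when only finitely many payloads are \op{of-broadcast}. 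Accordingly the proof I would write is just: the five properties of Definition~\ref{def:kappa-of} hold by Lemmas~\ref{analysis:noduplication}, \ref{analysis:agreement}, \ref{analysis:totalorder}, \ref{analysis:validity}, and~\ref{analysis:kappa} respectively, together with the termination remark above.
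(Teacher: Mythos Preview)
Your proposal is correct and matches the paper's own approach: the paper simply states that Lemmas~\ref{analysis:noduplication}--\ref{analysis:kappa} directly imply the theorem, which is exactly the one-line assembly you describe. Your additional remarks on per-round termination and the content of each lemma are accurate elaborations but go beyond what the paper records for this theorem.
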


\section{Conclusion}

The quick order-fair atomic broadcast protocol guarantees payload message delivery in a differentially fair order. It works both for asynchronous and eventually synchronous networks with optimal resilience, tolerating corruptions of up to one third of the processes. Compared to existing order-fair atomic broadcast protocols, our protocol is considerably more efficient and incurs only quadratic cost in terms of amortized message complexity per delivered payload.

\section*{Acknowledgments}

We thank the anonymous reviewers for helpful suggestions and feedback.
Special thanks go to Mahimna Kelkar, who pointed out a problem in an
earlier version of this paper.

This work has been funded by the Swiss National Science Foundation (SNSF)
under grant agreement Nr\@.~200021\_188443 (Advanced Consensus Protocols).

\bibliography{dblpbibtex, references}
\bibliographystyle{ieeesort}

\end{document}
